%% file: main.tex
\documentclass[11pt,letterpaper]{article}
\usepackage{xspace}
\usepackage[utf8]{inputenc}
\usepackage[english]{babel}

\usepackage[dvipsnames]{xcolor}
\usepackage[colorlinks=true,pdfpagemode=UseNone,urlcolor=RoyalBlue,linkcolor=RoyalBlue,citecolor=OliveGreen,pdfstartview=FitH]{hyperref}
\definecolor{hkured}{HTML}{EE4123}
\definecolor{hkublue}{HTML}{009BD4}
\definecolor{hkugreen}{HTML}{00B38C}
\definecolor{hkuyellow}{HTML}{FED401}

\usepackage{framed}
\definecolor{shadecolor}{HTML}{E0E0E0}

\usepackage{tikz}
\usetikzlibrary{shapes, arrows.meta, positioning, patterns}
\usepackage{pgfplots}
\usepgfplotslibrary{fillbetween}
\pgfdeclarelayer{ft}
\pgfdeclarelayer{bg}
\pgfsetlayers{bg,main,ft}
\pgfplotsset{compat=1.15}

\usepackage{subcaption}
\usepackage{tabto}

\usepackage[margin=1in]{geometry}
\usepackage{amsfonts,amsmath,amsthm,amssymb}
\usepackage{mathtools,thmtools}
\usepackage{bm}

\usepackage{array}
\newcolumntype{x}[1]{>{\centering\arraybackslash\hspace{0pt}}p{#1}}

\usepackage{nicefrac}

\declaretheorem[parent=section]{theorem}
\declaretheorem[sibling=theorem]{lemma}

\declaretheorem[sibling=theorem]{remark}



\usepackage{todonotes}
\usepackage{tcolorbox}
\usepackage{booktabs}
\usepackage{caption}
    
\usepackage{multirow}

\usepackage{enumitem}

\usepackage[numbers, sort]{natbib}

\newtcolorbox{algorithm}[1]
{
	adjusted title = {#1},
	fonttitle = \bfseries,
  	beforeafter skip = 12pt,
}
\usepackage{cleveref}

\input{notation}

\title{Mixture-of-Experts Serving}
\author{
Zhiyi Huang \footnote{The University of Hong Kong. Email: zhiyi@cs.hku.hk, qinpeilou@connect.hku.hk}
\and
Qinpei Lou \footnotemark[1]
\and
Tao Xiao \footnote{Huawei Taylor Lab. Email: xiaotao21@huawei.com}
}
\date{July 2026}

\begin{document}

\begin{titlepage}
    \thispagestyle{empty}

    \maketitle

    \begin{abstract}
        \thispagestyle{empty}
        Mixture-of-Experts (MoE) models route each token to only a few expert networks, distributing the serving load across experts whose popularity shifts over time.
        A serving system must therefore dynamically decide how many GPUs to assign to each expert, trading off service latency against the cost of reconfiguring the assignment.
        We introduce a formal model of \emph{MoE Serving} and initiate a principled study of online and offline algorithms for it.
        Our main result is a polynomial-time $O(\sqrt{\log k})$-competitive online algorithm, where $k$ is the number of GPUs beyond one per expert.
        We complement it with a matching $\Omega(\sqrt{\log k})$ barrier for the online dual problem underlying our analysis.
        In the offline setting, we give a constant-factor approximation, show that MoE Serving is NP-hard, and rule out an FPTAS assuming ETH.
    \end{abstract}

\end{titlepage}

\section{Introduction}

Dense language models scale by making every token use more parameters, tying computation to capacity.
Mixture-of-Experts (MoE) models~\cite{JacobsEtAl-NC-1991,JordanJacobs-NC-1994} break this tie by maintaining a collection of expert networks but routing each token to only one or a few of them, so capacity can grow while the per-token computation stays low~\cite{ShazeerEtAl-ICLR-2017,FedusZS-JMLR-2022,DaiEtAl-arXiv-2024}.
The MoE architecture underlies prominent language models, including Gemini~3~Pro~\cite{GeminiThreePro-2025}, Mixtral~\cite{JiangEtAl-arXiv-2024}, DeepSeek-V3~\cite{DeepSeekAI-arXiv-2024}, and GPT-OSS~\cite{OpenAI-arXiv-2025}.
DeepSeek-V3, for instance, activates only $37$ billion of its $671$ billion parameters per token~\cite{DeepSeekAI-arXiv-2024}.

Sparsity keeps these models cheap to run, but it reshapes the inference problem.
The serving load is no longer handled by one dense model, but routed through experts whose popularity shifts over time and across application domains.
A serving system must therefore decide not only how many GPUs to provision, but also where to place that capacity among the experts as the routed workload changes.
To serve DeepSeek-V3, for example, the decoding stage deploys its $256$ routed experts across $320$ GPUs --- one expert per GPU, with the remaining GPUs holding redundant copies of the busiest experts.%
\footnote{We omit deployment details that are immaterial to our model, such as \emph{shared experts} and other placement optimizations; the qualitative picture is accurate.}
The reported DeepSeek-V3 deployment recomputes the placement periodically (e.g., every ten minutes) from recently observed load using simple heuristics~\cite{DeepSeekAI-arXiv-2024}.
Designing principled algorithms for efficient MoE serving is therefore an important research direction.

\subsection{MoE Serving Problem}

This paper introduces a theoretical model for MoE serving, and initiates the study of online and offline algorithms through competitive and approximation analysis.
Consider $m$ experts deployed on $n>m$ GPUs.
At each step $t$, the system observes the workload routed to each expert and decides how many GPUs to assign to it.

There are two main costs to consider.
The \emph{latency cost} at each step is the time it takes for all experts to finish their workloads, determined by the ratio of workload to the number of GPUs for the most heavily loaded expert.
The \emph{reconfiguration cost} is proportional to the number of GPUs that change expert assignment from the previous step, e.g., the I/O cost of copying neural network weights, warming caches, and other overheads.
We consider minimizing the sum of the two costs.%
\footnote{A weighted sum would appear more general, but any weight only rescales the workloads, which already range over all positive reals.
    We therefore omit weights for notational simplicity.}

The model deliberately abstracts away low-level details while keeping the main algorithmic tension.
Putting more GPUs on currently busy experts reduces latency, but frequent reallocation creates instability and reconfiguration overhead.

% We study both dynamic and static versions of the problem.  In the dynamic problem, the allocation may change over time and the algorithm must be online: at time $t$, it knows the current workload $r_t$ but not future workloads.  This captures elastic serving, where the system may adapt to changing traffic.  In the static problem, the allocation must remain fixed throughout the horizon.  This captures systems in which reconfiguration is prohibited, too expensive, or performed only at coarse time scales.  The static problem also gives a clean offline benchmark for understanding the combinatorial difficulty of expert placement.

\subsection{Algorithms and Barriers}

We parameterize the running time and competitive/approximation ratios of the algorithms by the number of experts $m$ and the number of GPUs $n$.
It is often more natural to consider the number of extra GPUs $k=n-m$, beyond the initial assignment of one GPU per expert.

    % Our main result is a polynomial-time online algorithm for MoE Serving with a sub-logarithmic competitive ratio.
    % We also show a natural barrier to improving the analysis of this algorithm.

    {
        \setlength{\fboxsep}{8pt}%
        \begin{snugshade}
            \noindent{\bfseries Online MoE Serving Results}\par
            \vspace{6pt}
            \begin{itemize}[topsep=0pt, itemsep=2pt]
                \item There is a polynomial-time $O(\sqrt{\log k})$-competitive online algorithm for MoE Serving.
                \item There is no $o(\sqrt{\log k})$-competitive algorithm for the online dual problem arising in our\\ primal-dual analysis.
            \end{itemize}
        \end{snugshade}
    }

We first design a fractional online algorithm, building on the now-standard techniques of online primal-dual with Fenchel duality~\cite{AzarEtAl-FOCS-2016,DevanurJain-STOC-2012,DevanurHuang-SODA-2014,HuangKim-SODA-2015} and Kullback-Leibler divergence regularization~\cite{BuchbinderCN-SODA-2014,BuchbinderGMN-SODA-2019,BubeckEtAl-STOC-2018,CoesterLee-COLT-2019}.
Applied directly, these techniques give only a $\log k$ competitive ratio.
Our analysis further exploits the smoothness of the latency cost function (doubling the assigned GPUs at best halves the cost) and sublinearity of its conjugate, to improve the competitive ratio to $O(\sqrt{\log k})$.

% The algorithm is a regularized greedy method.  Writing $f_t(x)=\max_i r_{t,i}/(1+x_i)$ for the processing cost at step $t$, it chooses at each step the fractional allocation minimizing $f_t$ plus a KL-divergence penalty from the previous allocation:
% \[
%     x_t
%     =
%     \argmin_{x\in\mathbb R_{\ge 0}^m:\,\sum_i x_i=k}
%     \left(
%     f_t(x)
%     +
%     \KL(x+\mathbf 1\,\|\,x_{t-1}+\mathbf 1)
%     \right).
% \]
% The shift $x+\mathbf 1$ is not merely notational: it is the number of GPUs actually assigned to an expert, including its mandatory baseline GPU, and it matches the reciprocal form of $f_t$.  The KL regularizer penalizes \emph{multiplicative} changes in expert capacity, which is natural because moving one GPU away from a lightly replicated expert matters more than moving one away from a heavily replicated one.

% We analyze the update by online primal-dual, fitting a feasible solution to an offset Fenchel dual of the relaxation.  The one non-routine step concerns the reconfiguration dual $\beta$: the KKT conditions of the update suggest movement increments $\beta_{t+1,i}-\beta_{t,i}\approx-\log\frac{x_{t,i}+1}{x_{t-1,i}+1}$, whose cumulative value can span an interval of order $\log(k+1)$ (as $x_{t,i}+1$ ranges from $1$ to $k+1$), overshooting the feasible dual range.  Rescaling the regularizer to compensate would already cost an $O(\log k)$ factor; instead we rescale the dual variables themselves and absorb the loss through smoothness inequalities for $f_t$ and its offset Fenchel conjugate, improving the degradation from $O(\log k)$ to $O(\sqrt{\log k})$.

For rounding, we first examine existing techniques and find that the online rounding schemes for caching and paging~\cite{BlumBK-FOCS-1999,BansalBN-SICOMP-2012} adapt to MoE serving.
These schemes, however, are computationally inefficient even in their original settings.
We exploit the smoothness of the latency cost function once again to design a polynomial-time Lazy Threshold Rounding that preserves both the expected latency and reconfiguration costs up to a constant factor.

Is the $\sqrt{\log k}$ factor inherent to MoE serving, or can it be improved?
Although we cannot answer this question unconditionally, we show that $\sqrt{\log k}$ is a barrier for our online primal-dual analysis.
It is both a blessing and a curse that an online primal-dual algorithm solves the primal and dual problems simultaneously, as the best achievable ratio is limited by the harder of the two.

The dual is an online budget-allocation problem with an unknown stopping time.
As requests keep arriving for a single expert, one must repeatedly decide how much dual mass to commit, with diminishing marginal payoffs.
The offline benchmark knows the stopping time and can spread this mass evenly across the steps.
The online algorithm does not, and is therefore forced to overcommit early.
This tension drives a $\sqrt{\log k}$-factor loss.

    % We also show that the $\sqrt{\log k}$ loss is inherent for a natural online dual-fitting framework.  More precisely, we define an online dual construction problem corresponding to the Fenchel dual used in the analysis and prove that no online algorithm for this dual construction problem can achieve competitive ratio better than
    % \[
    %     \Omega(\sqrt{\log k}).
    % \]
    % This should be interpreted as a limitation of the dual-construction method, not as an unconditional lower bound for Dynamic MoE Serving itself.  It explains why the analysis cannot be improved to a constant factor merely by a sharper accounting of the same online dual.

    % \subsection{Offline MoE Serving Results}

    {
        \setlength{\fboxsep}{8pt}%
        \begin{snugshade}
            \noindent{\bfseries Offline MoE Serving Results}\par
            \vspace{6pt}
            \begin{itemize}[topsep=0pt, itemsep=2pt]
                \item There is a polynomial-time constant-factor approximation algorithm for MoE Serving.
                \item There is no fully polynomial-time approximation scheme (FPTAS) for MoE Serving\\ assuming the Exponential Time Hypothesis (ETH).
            \end{itemize}
        \end{snugshade}
    }

% In the offline setting, the entire workload sequence $r_1, \dots, r_T$ is revealed in advance.  Here a constant-factor approximation for the full dynamic problem follows directly from our online machinery, and it is complemented by a matching hardness of approximation.

% \begin{informal}[Offline MoE Serving]
%     MoE Serving admits a polynomial-time constant-factor approximation, obtained by rounding the convex relaxation.  Its static special case admits a sharper $2$-approximation, and the underlying relaxation has integrality gap at least $2-1/m$.  Moreover, MoE Serving is NP-hard, and assuming ETH it admits no FPTAS.
% \end{informal}

The constant-factor approximation follows by solving the convex relaxation offline and rounding it, e.g., with the same scheme used by our online algorithm.
For the static special case, where the configuration must remain fixed throughout, e.g., when reconfiguration is prohibitively expensive, we give a simple $2$-approximation and an almost matching integrality gap of $2-\frac{1}{m}$.

% The constant-factor approximation is immediate: solve the convex relaxation \Primal offline --- which, being convex, can be solved to optimality in polynomial time --- and then apply the same threshold rounding used by our online algorithm.  Since the rounding preserves both the processing and reconfiguration costs up to constant factors, the result is an $O(1)$-approximation for offline (dynamic) MoE Serving.

% For the \emph{static} special case --- equivalently, MoE Serving with prohibitively large reconfiguration cost, so that a single configuration must serve the entire horizon --- we obtain a sharper and particularly simple $2$-approximation: solve the fractional relaxation, round every coordinate down, and allocate the remaining GPUs arbitrarily.  The analysis uses only the inequality
% \[
%     1+x \le 2(1+\lfloor x\rfloor).
% \]
% The factor $2$ is tight for this relaxation: with $m$ experts, $n=2m-1$ GPUs, and a single uniform workload vector, the fractional optimum can spread the $m-1$ extra GPUs evenly, while every integral solution leaves some expert without an extra GPU, producing an integrality gap of $2-1/m$.

Finally, we prove that MoE Serving is computationally hard even beyond this relaxation barrier.
By a reduction from the Densest $k$-Subgraph problem, we show that the static problem is already NP-hard.
Together with the ETH-hardness of distinguishing dense and sparse $k$-subgraphs~\cite{Manurangsi-STOC-2017}, the reduction also rules out an FPTAS.

\subsection{Further Related Work}

\paragraph{Online Algorithm Problems with Reconfiguration Costs.}
The reconfiguration cost alone has been studied, under various names, in the literature on online paging and caching~\cite{SleatorTarjan-CACM-1985,FiatEtAl-JAlg-1991,BansalBN-SICOMP-2012}, $k$-server~\cite{ManasseMS-JAlg-1990,KoutsoupiasP-JACM-1995}, and chasing convex bodies~\cite{FriedmanL-DCG-1993,BubeckLLS-STOC-2019,Sellke-SODA-2020}.
The sum of a reconfiguration cost and a state-dependent cost of specific forms has been studied in metrical task systems~\cite{BorodinLS-JACM-1992,BartalEtAl-STOC-1997} and online set cover with service costs~\cite{BuchbinderCN-SODA-2014}.
The state-dependent cost in these two problems, in their usual fractional formulations, is linear in the state,%
\footnote{Metrical task systems are often defined with an arbitrary cost over a finite set of states, which is equivalently a linear cost over distributions on those states, i.e., over $x \ge 0$ with $\|x\|_1 = 1$.}
whereas our latency cost is nonlinear.
The best competitive ratios for all the above problems are logarithmic or worse, except in special cases.

Smoothed online convex optimization~\cite{AndrewEtAl-COLT-2013,BansalEtAl-APPROX-2015,ChenGW-COLT-2018} and convex function chasing~\cite{ArgueGG-COLT-2020} study the sum of a reconfiguration cost and a general convex state-dependent cost, capturing the fractional relaxation of MoE serving as a special case.
For arbitrary convex costs, the competitive ratio depends polynomially on the dimension~\cite{ChenGW-COLT-2018,ArgueGG-COLT-2020}.
Improved algorithms and competitive ratios are only known under structural assumptions that MoE serving does not satisfy, including one-dimensional decisions~\cite{BansalEtAl-APPROX-2015}, locally polyhedral costs~\cite{ChenGW-COLT-2018}, or strongly convex and smooth costs~\cite{ArgueGG-COLT-2020}.

% Smoothed online convex optimization (SOCO) and its generalization, convex function chasing, are closest in spirit: at each step the algorithm observes a convex hitting cost $f_t$ over $\R^d$, picks a point $x_t$, and pays $f_t(x_t) + \|x_t - x_{t-1}\|$, with chasing convex bodies the special case where $f_t$ is $0$ on a convex body and $+\infty$ outside.
% Constant-competitive algorithms are known only in structured regimes --- $2$-competitive on the line~\cite{BansalEtAl-APPROX-2015}, and dimension-free constants in high dimensions when the hitting costs are locally polyhedral, via online balanced descent~\cite{ChenGW-COLT-2018}, or $\kappa$-well-conditioned, with ratio $O(\sqrt{\kappa})$~\cite{ArgueGG-COLT-2020} --- while general high-dimensional costs require a $\mathrm{poly}(d)$ ratio.
% Our problem is a discrete instance of this template with special structure: the hitting cost is the inverse-load bottleneck $\max_{i \in [m]} r_{t,i}/(1+x_{t,i})$, the movement cost is the $\ell_1$ distance, and the decision is an \emph{integral} allocation of a fixed budget of $k$ GPUs over a simplex rather than an arbitrary point of $\R^d$.
% This structure calls for a different algorithm: instead of the Euclidean projection and balanced-descent updates that these methods tailor to smooth or polyhedral costs, we run an entropy-regularized greedy update suited to the simplex and the reciprocal cost, followed by an online rounding step to recover an integral allocation.

\paragraph{MoE Expert Imbalance and Placement.}
\citet{GoMahajan-arXiv-2025} optimized a static expert placement by solving an integer program.
\citet{LiEtAl-ATC-2023} instead showed empirically that expert load is skewed and shifts over time, so dynamically reallocating to the predicted load helps.
However, they did not consider the reallocation cost, such as the overhead for copying the experts' weights mentioned in the DeepSeek-V3 report~\cite{DeepSeekAI-arXiv-2024}.
This paper initiates the principled study of the trade-off between service latency and reconfiguration overhead.

\paragraph{Other Challenges in Language Model Serving.}
Serving language models raises a range of scheduling problems.
The natural scheduling granularity is a token-generation iteration, and high GPU utilization relies on effectively batching the active requests~\cite{YuEtAl-OSDI-2022,AgrawalEtAl-OSDI-2024}.
A further difficulty is that a request's output length is unknown when it arrives~\cite{WuEtAl-NSDI-2026}.
Compounding this, the key-value (KV) cache usage grows with the output length and can become the dominant memory constraint~\cite{KwonEtAl-SOSP-2023}.

Classical scheduling theory considered batching and unknown job lengths separately, well before the language model era.
The former has been studied for objectives such as makespan~\cite{Uzsoy-IJPR-1994}, completion time~\cite{Uzsoy-IJPR-1994,ChandruLU-IJPR-1993}, and flow time~\cite{HochbaumL-OR-1997}.
The latter was commonly known as non-clairvoyant scheduling~\cite{MotwaniPT-TCS-1994,KalyanasundaramP-JACM-2000,BecchettiL-JACM-2004}.

A recent line of theoretical work studied scheduling under a growing KV cache usage, in combination with batching and unknown job lengths.
\citet{WangYZ-arXiv-2025} proved NP-hardness of the offline problem with heterogeneous prefill and decode lengths, and gave a constant-factor approximation.
\citet{JailletEtAl-arXiv-2025} studied the online problem against a hindsight-optimal integer program and obtained a constant competitive ratio under further conditions.
\citet{KongQYZ-arXiv-2026} exploited the spatio-temporal geometry of cache growth to sharpen the competitive ratio toward $3$ in the high-concurrency regime.
% This line treats KV-cache memory as the scarce resource and optimizes the request schedule, whereas our model is complementary, treating the allocation of GPU capacity across experts --- and the reconfiguration cost of changing it over time --- as the scarce resource.

% Unlike these classical batch-processing models, LLM batching is coupled to autoregressive token generation, phase asymmetry, and dynamically changing KV-cache memory.

% % \paragraph{Unknown Generation Lengths.}
% %
% In language model serving, a request's output length, and hence its future resource demand, is unknown when the request arrives.
% FastServe preempted inference at token granularity and used a multilevel-feedback-queue scheduler to reduce head-of-line blocking~\cite{WuEtAl-NSDI-2026}, while others learned to predict the relative order of output lengths to approximate shortest-job-first scheduling~\cite{FuEtAl-NeurIPS-2024}.

\section{Preliminaries}

\paragraph{Notation.}
Let $\Z$ and $\R$ denote the sets of integers and real numbers, respectively.
We use subscripts $\ge 0$ and $> 0$ to indicate the sets of non-negative and positive numbers.
For example, $\Z_{\ge 0}$ and $\Z_{> 0}$ are the sets of non-negative and positive integers, respectively.
For any positive integer $m$, let $[m]$ denote the set $\{1, 2, \dots, m\}$.

\paragraph{Dynamic Model.}
We consider the \emph{Dynamic MoE Serving} problem, where we deploy an MoE LLM consisting of $m$ experts onto $n > m$ GPUs.
The system processes requests over $T$ discrete time steps.
At each step $t \in [T]$, the algorithm receives a request represented as a workload vector $r_t = (r_{t,i})_{i \in [m]} \in \R_{\ge 0}^m$, where $r_{t,i}$ represents the volume of user requests routed to expert $i$.

Upon observing $r_t$, the algorithm picks a deployment configuration $c_t = (c_{t,i})_{i \in [m]} \in \Z_{> 0}^m$, where $c_{t,i}$ is the number of GPUs allocated to expert $i$.
The constraint $c_{t,i} \ge 1$ ensures that every expert is available at all times.
This effectively means that the first $m$ GPUs are allocated evenly to the experts.
Analytically, it is more convenient to track the allocation of the remaining GPUs.
Let $x_{t,i} = c_{t,i} - 1 \in \Z_{\ge 0}$ denote the number of remaining GPUs assigned to expert $i$, subject to the budget constraint that $\sum_{i \in [m]} x_{t,i} = k \defeq n - m$.

Given a configuration $x_t$, the \emph{latency cost} of the system at step $t$ is determined by the time it takes for all experts to complete their workloads:
\[
    \max_{i \in [m]} \frac{r_{t,i}}{1 + x_{t,i}}
    ~.
\]

Furthermore, transitioning from the previous configuration $x_{t-1}$ to the new configuration $x_t$ incurs a \emph{reconfiguration cost} proportional to the $L_1$ distance between the allocations:
\[
    \sum_{i \in [m]} \big| x_{t-1,i} - x_{t,i} \big|
    ~,
\]
where $x_0$ is an arbitrary initial configuration given as part of the instance.

The objective of the Dynamic MoE Serving problem is to minimize the sum of the total latency cost and the total reconfiguration cost over the time horizon $[T]$:
\[
    \sum_{t \in [T]} \bigg( \max_{i \in [m]} \frac{r_{t,i}}{1 + x_{t,i}} + \sum_{i \in [m]} \big| x_{t-1,i} - x_{t,i} \big| \bigg)
    ~.
\]

\paragraph{Static Model.}
The \emph{Static MoE Serving} problem is the variant where reconfiguration is strictly prohibited, meaning $x_t = x$ for all $t \in [T]$.
Because the reconfiguration cost is zero, the problem simplifies to finding a single, fixed configuration $x \in \Z_{\ge 0}^m$ subject to  $\sum_i x_i = k$ that minimizes the total latency cost:
\[
    \sum_{t = 1}^T \max_{i \in [m]} \frac{r_{t,i}}{1 + x_i}
    ~.
\]

\paragraph{Approximation and Competitive Ratios.}
Let $\OPT$ denote the minimum total cost achieved by the optimal schedule in hindsight.
Let $\ALG$ denote the total cost incurred by the algorithm's schedule.
For any $\Gamma \ge 1$, an offline algorithm yields a $\Gamma$-approximation if $\E \big[ \ALG \big]\le \Gamma \cdot \OPT$ for all instances. We analyze offline approximation algorithms for both the static and dynamic models.
For any $\Gamma \ge 1$, an online algorithm is $\Gamma$-competitive if $\E \big[ \ALG \big] \le \Gamma \cdot \OPT + C$ for all instances, where $C$ is an additive term independent of the time horizon $T$ (but can depend on the numbers of experts and GPUs). We consider online algorithms only in the dynamic model.

\section{Convex Programs}
\label{sec:convex-programs}

% \subsection{Primal Convex Program Relaxation}

This section derives the convex program relaxation of the Dynamic MoE Serving problem and its Fenchel dual.
The proofs of the lemmas are standard, and therefore deferred to \Cref{app:convex-programs}.

For each step $t \in [T]$, define a convex function $f_t$ to denote the latency cost of configuration $x_t$, extended to non-negative real vectors:
\[
    f_t(x_t) ~\defeq~
    \begin{cases}
    \max_{i \in [m]} \dfrac{r_{t,i}}{1+x_{t,i}} & \mbox{if $x_{t,i} \ge 0$ for all $i \in [m]$;} \\
    +\infty & \mbox{otherwise.}
    \end{cases}
\]

To linearize the reconfiguration cost, we let $y_{t,i} \ge \max \{ x_{t,i} - x_{t-1,i}, 0 \}$ denote the incremental cost of expert $i$ from step $t-1$ to step $t$.
The budget $k$ is constant at all times (i.e., $\sum_i x_{t,i} = \sum_i x_{t-1,i} = k$).
Hence, the incremental cost matches the decremental cost, and equals half the total reconfiguration cost.

Putting together, we have the following convex program relaxation of Dynamic MoE Serving, and denote its objective value as $\Primal$ and the optimal value as $\Primal^\star$.
\begin{align*}
    \textrm{minimize} \quad
                                                                          &
    \sum_{t \in [T]} f_t(x_t) + \sum_{t \in [T]} \sum_{i \in [m]} y_{t,i} &   & \text{(\Primal)}                     \\
    \textrm{subject to} \quad
                                                                          &
    \sum_{i \in [m]} x_{t,i} \le k                                          &   & \forall t \in [T]                    \\
                                                                          &
    y_{t,i} \ge x_{t,i} - x_{t-1,i}                                       &   & \forall i \in [m], \forall t \in [T] \\[2ex]
                                                                          &
    x_{t,i}, y_{t,i} \ge 0                                                &   & \forall i \in [m], \forall t \in [T]
\end{align*}

We do not attempt to optimize the constants in this paper, and thus, omit the factor $2$ difference between incremental and total reconfiguration costs for ease of notation. 
There is an additional technical relaxation. Although all algorithms we design maintain the equality-budget constraint $\sum_i x_{t,i}=k$ at every time step, in the primal relaxation used for the dual analysis we relax this equality by $\sum_i x_{t,i}\le k$. This can only decrease the fractional optimum, so the relaxation remains a valid lower bound on the optimal integral cost. The inequality form is more convenient for our argument.
% Besides, we use the inequality $\sum_i x_{t,i} \le k$ rather than the equality constraint $\sum_i x_{t,i}=k$, as a relaxation. It may only decrease the fractional optimum, and hence still gives a valid lower bound on the integral offline optimum. This relaxation is convenient for our later analysis, since the dual variable associated with the resource constraint can be taken to be non-negative.

\begin{lemma}
    \label{lem:primal-objective}
    For any instance of Dynamic MoE Serving, we have $\Primal^\star \le \OPT$.
\end{lemma}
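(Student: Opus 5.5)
The plan is to show that any integral schedule, in particular an optimal one, yields a feasible solution of \Primal whose objective value is at most its true cost. Since $\Primal^\star$ is a minimum over all feasible solutions, this gives $\Primal^\star \le \OPT$. Concretely, fix an optimal integral schedule $x^\ast_1,\dots,x^\ast_T$ with $x^\ast_t \in \Z_{\ge 0}^m$, $\sum_i x^\ast_{t,i} = k$, and the given initial configuration $x_0$. Set $x_t = x^\ast_t$ and $y_{t,i} = \max\{x^\ast_{t,i} - x^\ast_{t-1,i}, 0\}$, using $x^\ast_0 = x_0$.

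Next I check feasibility. The budget constraint $\sum_i x_{t,i} \le k$ holds with equality. The constraint $y_{t,i} \ge x_{t,i} - x_{t-1,i}$ holds by definition of $y$. Nonnegativity holds because $x^\ast_t$ is integral and nonnegative, and $y$ is a positive part. Because every $x_{t,i} \ge 0$, each $f_t(x_t)$ equals the finite latency cost $\max_i r_{t,i}/(1+x^\ast_{t,i})$. So the latency terms of \Primal match the latency cost of the schedule exactly.

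It remains to compare the reconfiguration terms. Both $x^\ast_t$ and $x^\ast_{t-1}$ sum to $k$, so $\sum_i (x^\ast_{t,i} - x^\ast_{t-1,i}) = 0$. Hence the total positive part equals the total negative part, and
\[
\sum_{i} y_{t,i} = \tfrac12 \sum_i \bigl|x^\ast_{t,i} - x^\ast_{t-1,i}\bigr| \le \sum_i \bigl|x^\ast_{t,i} - x^\ast_{t-1,i}\bigr|.
\]
Summing over $t$ shows that the \Primal objective of this solution is at most $\OPT$, and therefore $\Primal^\star \le \OPT$.

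The one point needing care is $t=1$: the initial configuration $x_0$ must also satisfy $\sum_i x_{0,i} = k$, so that the halving identity applies at the first step. This holds because $x_0$ is a valid configuration given as part of the instance. Even without it, the inequality $\sum_i \max\{a_i,0\} \le \sum_i |a_i|$ holds termwise, so the bound survives in any case. I expect no real obstacle beyond this bookkeeping.
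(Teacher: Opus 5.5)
Your proof is correct and takes the same route as the paper: the integral optimum, with $y_{t,i} = \max\{x_{t,i}-x_{t-1,i},0\}$, is a feasible point of \Primal whose objective is at most its true cost. Your version is more careful than the paper's one-line argument, because you explicitly note that \Primal charges only the incremental cost $\sum_i y_{t,i}$, which is half the $L_1$ reconfiguration cost; the paper leaves this implicit, and its appeal to convexity is not needed.
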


% \subsection{Fenchel Dual Convex Program}
% \label{sec:dual}
% \label{sec:dual}

We use an offset Fenchel dual of the primal relaxation.
For each $t \in [T]$, define the function $\hat{f}_t : \R_{\ge 0}^m \to \R \cup \{-\infty\}$ with respect to a non-negative dual variable $\gamma_t \ge 0$:
\[
    \hat{f}_t(\gamma_t) ~\defeq~ \inf_{x_t \ge 0} \big( f_t(x_t) + \langle x_t, \gamma_t \rangle \big)
    ~.
\]

The standard convex conjugate of $f_t$ is $f_t^\star (\gamma_t) = \sup_{x_t \ge 0} \big(\langle x_t, \gamma_t \rangle - f_t(x_t)\big)$.
The above convention is equivalent to $\hat{f}_t(\gamma_t) = - f_t^\star(-\gamma)$, which keeps $\gamma_t$ and $\hat{f}_t(\gamma_t)$  non-negative.
With this definition, we recover $f_t$ as the pointwise supremum of its affine minorants:
\[
    f_t(x_t) \,=\, \sup_{\gamma_t \ge 0} \big( \hat{f}_t(\gamma_t) - \langle x_t, \gamma_t \rangle \big)
    ~.
\]

\begin{remark}[Subgradient Convention]
    We use $\partial f_t(x_t)$ to denote the subdifferential of the finite latency-cost expression
    \[
        x_t \mapsto \max_{i\in[m]} \frac{r_{t,i}}{1+x_{t,i}}
    \]
    on its natural domain $x_{t,i} > -1$.
    The normal cone from the constraint $x_t\ge 0$ is handled separately by the convex program.
    With this convention, $\partial f_t(x_t)$ is the convex hull of the gradients of the active coordinates attaining the maximum.
\end{remark}

\begin{lemma}
    \label{lem:property-of-f}
    For any $t \in [T]$, the following properties hold.
    \begin{enumerate}
        \item (Conjugate Pair) For any
              $\gamma_t\in -\partial f_t(x_t)$, $f_t(x_t) = \left\langle \gamma_t, x_t+\mathbf 1\right\rangle$.
        \item (Smoothness of $f_t$) For any $\theta \ge 1$ and $x_t, x'_t \in \R_{\ge 0}^m$ satisfying $x_{t,i}+1 \le \theta (x'_{t,i}+1)$ for every $i \in [m]$, we have $f_t(x'_t) \le \theta f_t(x_t)$.
        \item (Sublinearity of $\hat{f}_t$) For any $\theta \in (0,1]$ and $\gamma_t \ge 0$, $\hat f_t(\theta \gamma_t) \ge \sqrt{\theta}\,\hat f_t(\gamma_t)$.
    \end{enumerate}
\end{lemma}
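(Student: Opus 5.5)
The three properties are of different nature. I would prove the first two directly from the formula $f_t(x_t)=\max_i r_{t,i}/(1+x_{t,i})$, and the third by first rewriting $\hat f_t$ as a one-dimensional infimum over the latency level.

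\emph{Conjugate pair.} By the subgradient convention, $\partial f_t(x_t)$ is the convex hull of the gradients $-\frac{r_{t,i}}{(1+x_{t,i})^2}e_i$ over the active set $A=\{i : r_{t,i}/(1+x_{t,i}) = f_t(x_t)\}$. So any $\gamma_t\in-\partial f_t(x_t)$ can be written as $\gamma_t=\sum_{i\in A}\lambda_i \frac{r_{t,i}}{(1+x_{t,i})^2}e_i$ with $\lambda\ge 0$ and $\sum_i\lambda_i=1$. Then
\[
\langle\gamma_t,x_t+\mathbf 1\rangle=\sum_{i\in A}\lambda_i\frac{r_{t,i}}{1+x_{t,i}}=\sum_{i\in A}\lambda_i f_t(x_t)=f_t(x_t),
\]
because every active coordinate attains the maximum.

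\emph{Smoothness.} The hypothesis $x_{t,i}+1\le\theta(x'_{t,i}+1)$ gives, coordinatewise,
\[
\frac{r_{t,i}}{1+x'_{t,i}}\le\theta\,\frac{r_{t,i}}{1+x_{t,i}}\le\theta f_t(x_t).
\]
Taking the maximum over $i$ yields $f_t(x'_t)\le\theta f_t(x_t)$. Both points are non-negative, so neither value is $+\infty$.

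\emph{Sublinearity.} This is the only part needing a real idea. I would eliminate $x_t$ by fixing the latency level $L>0$. For a given $L$, the cheapest $x_t\ge 0$ with $\max_i r_{t,i}/(1+x_{t,i})\le L$ is $x_{t,i}=(r_{t,i}/L-1)^+$. Its cost is $\sum_i\gamma_{t,i}x_{t,i}$, and this choice is optimal coordinatewise since $\gamma_t\ge 0$. Hence
\[
\hat f_t(\gamma_t)=\inf_{L>0}\varphi(L,\gamma_t),\qquad \varphi(L,\gamma)\defeq L+\sum_i\gamma_i\Big(\frac{r_{t,i}}{L}-1\Big)^+ .
\]
If $r_t=0$, both sides of the claim are $0$, so I would treat that case separately. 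Otherwise, take any $L>0$ and compare with the test level $L'=L/\sqrt\theta$. This gives
\[
\sqrt\theta\,\varphi(L',\gamma_t)=L+\sum_i\gamma_{t,i}\Big(\frac{\theta r_{t,i}}{L}-\sqrt\theta\Big)^+ .
\]
Since $\sqrt\theta\ge\theta$ for $\theta\in(0,1]$, we have $(\theta a-\sqrt\theta)^+\le(\theta a-\theta)^+=\theta(a-1)^+$ for every $a$. Therefore $\sqrt\theta\,\varphi(L',\gamma_t)\le\varphi(L,\theta\gamma_t)$, and so $\sqrt\theta\,\hat f_t(\gamma_t)\le\varphi(L,\theta\gamma_t)$. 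Taking the infimum over $L$ gives $\hat f_t(\theta\gamma_t)\ge\sqrt\theta\,\hat f_t(\gamma_t)$.

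The main obstacle is justifying the reduction of $\hat f_t$ to the infimum over $L$. This requires checking that restricting to the minimal $x_t$ for each level loses nothing, which follows from $\gamma_t\ge 0$, and handling the boundary behavior as $L\to 0$. Once that formula is established, the rescaling trick $L\mapsto L/\sqrt\theta$ is a one-line inequality.
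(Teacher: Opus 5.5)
Your proposal is correct and follows the paper's proof essentially step for step. The conjugate-pair and smoothness arguments are the same. For sublinearity, both you and the paper reduce $\hat f_t$ to an infimum over the latency level, and your rescaling $L' = L/\sqrt{\theta}$ is the paper's substitution $\tau = \sqrt{\theta}\, s$ read in the opposite direction, so the two proofs rest on the same inequality.
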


The dual variables are $\alpha_t \in \R_{\ge 0}$ for the budget constraints, $\beta_{t,i} \in \R_{\ge 0}$ for the incremental reconfiguration constraints, and $\gamma_{t,i} \in \R_{\ge 0}$ for the Fenchel representation of $f_t$.
We also use an auxiliary terminal variable $\beta_{T+1} \in \R_{\ge 0}^m$ to write the dynamic constraints uniformly.
The resulting offset Fenchel dual program is:
\begin{align*}
    \textrm{maximize} \quad
                                                   &
    \sum_t \hat{f}_t(\gamma_t) - \sum_t k \alpha_t &                                                         & \text{(\Dual)}                                          \\
    \textrm{subject to} \quad
                                                   & \gamma_{t,i} + \beta_{t+1,i} - \beta_{t,i} \le \alpha_t &                & \forall t \in [T], \forall i \in [m]   \\[1.5ex]
                                                   & 0 \le \beta_{t,i} \le 1                                 &                & \forall t \in [T+1], \forall i \in [m] \\[1.5ex]
                                                   & \gamma_{t,i} \ge 0                                      &                & \forall t \in [T], \forall i \in [m]
                                                   \\[1.5ex]
                                                   & \alpha_t \ge 0                                      &                & \forall t \in [T]
\end{align*}

Let $\Dual$ denote the objective value of the above program and the optimal value as $\Dual^\star$.
We relate the optimal primal and dual objective values as follows.

\begin{lemma}
    \label{lem:primal-dual-objective}
    For any instance of Dynamic MoE Serving, we have $\Dual^\star \le \Primal^\star + k$.
\end{lemma}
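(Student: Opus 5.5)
Weak duality proceeds by a Lagrangian argument with the terminal variable $\beta_{T+1}$ producing the additive slack. We fix any primal-feasible $(x,y)$ and any dual-feasible $(\alpha,\beta,\gamma)$, and show $\Dual \le \Primal + k$; taking the infimum over primal and supremum over dual gives the claim. By the definition of $\hat f_t$ as an infimum, for every $t$ we have $\hat f_t(\gamma_t) \le f_t(x_t) + \langle x_t,\gamma_t\rangle$ since $x_t \ge 0$. Summing over $t$, it suffices to show
\[
    \sum_t \langle x_t,\gamma_t\rangle - \sum_t k\alpha_t \le \sum_t\sum_i y_{t,i} + k .
\]

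Using the first dual constraint and $x_{t,i}\ge 0$, we bound $\langle x_t,\gamma_t\rangle \le \sum_i x_{t,i}\alpha_t + \sum_i x_{t,i}(\beta_{t,i}-\beta_{t+1,i})$. Since $\alpha_t \ge 0$ and $\sum_i x_{t,i}\le k$, the $\alpha$ terms contribute at most $\sum_t k\alpha_t$, which cancels. It remains to handle $\sum_t\sum_i x_{t,i}(\beta_{t,i}-\beta_{t+1,i})$.

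We expect the main step to be this summation by parts, done per expert $i$:
\[
    \sum_{t=1}^T x_{t,i}(\beta_{t,i}-\beta_{t+1,i}) = x_{1,i}\beta_{1,i} + \sum_{t=2}^{T} \beta_{t,i}(x_{t,i}-x_{t-1,i}) - x_{T,i}\beta_{T+1,i}.
\]
Writing $x_{1,i}\beta_{1,i} = \beta_{1,i}(x_{1,i}-x_{0,i}) + \beta_{1,i}x_{0,i}$ makes the sum uniform over $t\in[T]$. The bounds $0\le\beta_{t,i}\le 1$ then give $\beta_{t,i}(x_{t,i}-x_{t-1,i}) \le \max\{x_{t,i}-x_{t-1,i},0\} \le y_{t,i}$, and the terminal term $-x_{T,i}\beta_{T+1,i}\le 0$ can be dropped. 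The leftover is $\sum_i \beta_{1,i}x_{0,i} \le \sum_i x_{0,i} = k$, because the initial configuration satisfies the budget. This is exactly the additive $k$. Combining the three displays yields $\Dual \le \Primal + k$.

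One detail to check is that infeasible $x_t$, where $f_t = +\infty$, never occurs on the primal side; this holds by feasibility. Likewise, $\hat f_t$ may be $-\infty$ only on the dual side, which is harmless for the inequality.
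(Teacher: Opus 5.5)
Your proof is correct and is essentially the paper's argument: the paper writes the Lagrangian and regroups it by primal variables, which is exactly your summation by parts, and so gets weak duality for a dual with the extra term $-\langle x_0,\beta_1\rangle$, which it then bounds by $\langle x_0,\beta_1\rangle\le k$ just as you do. One small slip in your opening framing: the additive $k$ comes from $\langle x_0,\beta_1\rangle$, not from the terminal variable $\beta_{T+1}$, whose term $-x_{T,i}\beta_{T+1,i}$ you correctly drop as nonpositive.
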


\paragraph{Karush-Kuhn-Tucker Conditions.}
While we will not use the optimality conditions of the optimal primal and dual solutions directly, we include them below for easy comparison with the optimality conditions of the projection steps of our online algorithm.
Let $x^\star, y^\star$ and $\alpha^\star, \beta^\star, \gamma^\star$ be the optimal primal and dual solutions.
They satisfy not only the primal and dual constraints, but also the following optimality conditions, including stationarity and complementary slackness:
\begin{align*}
    -\gamma^\star_t                                                                                         & ~\in~ \partial f_t(x^\star_t) ~, \\%[1ex]
    \beta^\star_{t,i} \big( y^\star_{t,i} - x^\star_{t,i} + x^\star_{t-1,i} \big)                           & ~=~ 0 ~,                         \\%[1ex]
    y^\star_{t,i} \big( 1 - \beta^\star_{t,i} \big)                                                         & ~=~ 0 ~,                         \\%[1ex]
    x^\star_{t,i} \big( \alpha^\star_t + \beta^\star_{t,i} - \beta^\star_{t+1,i} - \gamma^\star_{t,i} \big) & ~=~ 0 ~, \\%[1ex]
    \alpha^\star_t \big( \sum_{i \in [m]} x^\star_{t,i} - k \big) & ~=~ 0 ~.                         
\end{align*}

\section{Dynamic Model}
\label{sec:dynamic}

This section considers the Dynamic MoE Serving problem, for which we give efficient online and offline algorithms.
% The main result is the following theorem.

\begin{theorem}
    \label{thm:dynamic}
    The Dynamic MoE Serving problem:
    \begin{enumerate}
        \item There is a polynomial-time $O(\sqrt{\log k})$-competitive online algorithm; and
        \item There is a polynomial-time $O(1)$-approximation algorithm.
    \end{enumerate}
\end{theorem}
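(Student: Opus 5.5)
The plan has two parts: a fractional online algorithm analyzed by online primal-dual against (\Dual), followed by a polynomial-time rounding that loses only a constant factor. For the fractional part, at each step $t$ I would choose $x_t$ greedily with KL regularization:
\[
    x_t \in \operatorname*{argmin}_{x \ge 0,\ \sum_i x_i = k} \Big( f_t(x) + \eta \sum_{i\in[m]} \Big[ (x_i+1)\log\frac{x_i+1}{x_{t-1,i}+1} - x_i + x_{t-1,i} \Big] \Big),
\]
for a constant scaling $\eta$. This is a convex program, so it is solvable in polynomial time. Its KKT conditions give $-\gamma_t \in \partial f_t(x_t)$, a multiplier $\alpha_t$ for the budget, and a log-ratio term $\eta\log\frac{x_{t,i}+1}{x_{t-1,i}+1}$. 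I would define the dual by $\beta_{t+1,i}-\beta_{t,i} = -\eta\log\frac{x_{t,i}+1}{x_{t-1,i}+1}$, up to a normalizing offset, so that the constraint $\gamma_{t,i}+\beta_{t+1,i}-\beta_{t,i}\le\alpha_t$ follows from stationarity. The primal latency cost $f_t(x_t)$ is then matched to $\hat f_t(\gamma_t)-k\alpha_t$ via the conjugate-pair identity of \Cref{lem:property-of-f}. The movement cost $\sum_i (x_{t,i}-x_{t-1,i})^+$ is bounded by the KL term through the standard inequality $a\log(a/b)\ge a-b$.

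The obstacle is feasibility $0\le\beta\le 1$. Cumulatively $\beta_{t,i}$ tracks $\eta\log(x_{t,i}+1)$, which ranges over an interval of length about $\eta\log(k+1)$. Setting $\eta = 1/\log(k+1)$ keeps $\beta$ feasible but loses a $\log k$ factor. Instead I would take $\eta$ constant and scale the whole dual solution by $\theta = 1/\log(k+1)$ to restore feasibility. Scaling shrinks $k\alpha_t$ and the $\beta$ terms linearly, but by the sublinearity of $\hat f_t$ (\Cref{lem:property-of-f}, part 3) we have $\hat f_t(\theta\gamma_t)\ge\sqrt{\theta}\,\hat f_t(\gamma_t)$. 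This suggests choosing the scaling and $\eta$ so the latency part loses only $1/\sqrt\theta=\sqrt{\log k}$. Concretely, I would set $\eta$ so the regularizer's range is $\sqrt{\log k}$ and the dual scaling is $1/\sqrt{\log k}$, balancing the two losses.

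To charge the primal latency against this weaker dual, I would use the smoothness of $f_t$ (part 2 of \Cref{lem:property-of-f}) to compare $f_t(x_t)$ with $f_t$ at the point implied by the scaled dual. Getting the bookkeeping of this balance right, so that both the latency and movement terms lose $O(\sqrt{\log k})$, is the step I expect to be hardest. \Cref{lem:primal-dual-objective,lem:primal-objective} then give fractional cost at most $O(\sqrt{\log k})(\OPT+k)$. The additive $k$ is independent of $T$, which is allowed in the definition of competitiveness.

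For rounding, I would use a lazy threshold scheme. Draw one random threshold per expert, set $c_{t,i}$ to be $1+x_{t,i}$ rounded to within a factor 2, and change $c_{t,i}$ only when the fractional value crosses a threshold. Fix the budget by moving surplus GPUs, charged to the fractional movement. Smoothness with $\theta=2$ bounds the latency by $2f_t(x_t)$, and the expected number of crossings is $O(\sum_i|x_{t,i}-x_{t-1,i}|)$. This yields part 1. For part 2, I would solve (\Primal) offline exactly in polynomial time, since it is convex, and apply the same rounding. \Cref{lem:primal-objective} then gives an $O(1)$-approximation; the additive $k$ from duality is not needed because we round the primal optimum directly.
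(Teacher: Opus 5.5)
Your architecture matches the paper's: KL-regularized greedy, a dual solution scaled down to make $\beta$ feasible, sublinearity of $\hat f_t$ so that only a square root is lost, lazy threshold rounding, and rounding the offline relaxation. The gap is that the step producing $O(\sqrt{\log k})$ is left undone, and the parameter choice you finally commit to does not give it.

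Write $\lambda$ for the regularizer coefficient (your $\eta$) and $\theta$ for the dual scaling. Feasibility of $\beta$ needs $\theta\lambda\ln(k+1)\le 1$. Stationarity gives $\lambda\ln\frac{x_{t,i}+1}{x_{t-1,i}+1}=\gamma'_{t,i}-\alpha'_t$ wherever $x_{t,i}>0$. So movement is controlled only as $\sum_i y_{t,i}\le\frac{1}{\lambda}\sum_i (x_{t,i}+1)\gamma'_{t,i}=f_t(x_t)/\lambda$, while the scaled dual retains $\sqrt{\theta}\sum_t f_t(x_t)-m$. With the largest feasible $\theta=1/(\lambda\ln(k+1))$, the ratio is $(1+1/\lambda)\sqrt{\lambda\ln(k+1)}$, which is minimized at $\lambda=1$.

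Your ``balanced'' choice $\lambda\approx\theta\approx 1/\sqrt{\log k}$ instead gives $(1+\sqrt{\log k})(\log k)^{1/4}=\Theta((\log k)^{3/4})$. It is also inconsistent with the target $1/\sqrt{\theta}=\sqrt{\log k}$ you state one sentence earlier. There are not two losses to balance. The paper keeps the regularizer coefficient at $1$, so movement never exceeds latency. It then puts the whole factor into the dual scaling $1/\eta$ with $\eta=\max\{1,\ln(k+1)\}$, paying $\sqrt{\eta}$ exactly once. Your earlier sentence, ``take $\eta$ constant and scale the whole dual solution by $1/\log(k+1)$'', was the right choice.

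The bookkeeping you defer is short, and it does not use smoothness of $f_t$; the paper uses smoothness only in the rounding. Complementary slackness and $\sum_i x_{t,i}=k$ give $k\alpha_t=\sum_i x_{t,i}(\gamma_{t,i}+\beta_{t+1,i}-\beta_{t,i})$, hence
\[
\Dual=\sum_t\bigl(\hat f_t(\gamma_t)-\langle x_t,\gamma_t\rangle\bigr)+\sum_t\langle x_t,\beta_t-\beta_{t+1}\rangle .
\]
\begin{itemize}
\item \emph{First sum.} With $\gamma_t=\gamma'_t/\eta$, sublinearity and $1/\eta\le 1/\sqrt{\eta}$ bound it below by $\frac{1}{\sqrt{\eta}}\sum_t\bigl(\hat f_t(\gamma'_t)-\langle x_t,\gamma'_t\rangle\bigr)=\frac{1}{\sqrt{\eta}}\sum_t f_t(x_t)$. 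The equality holds because $-\gamma'_t\in\partial f_t(x_t)$ makes $x_t$ a minimizer in the definition of $\hat f_t(\gamma'_t)$.
\item \emph{Second sum.} It telescopes to $\frac{1}{\eta}\sum_t\KL(x_t+\mathbf 1\,\|\,x_{t-1}+\mathbf 1)+\langle\mathbf 1,\beta_{T+1}-\beta_1\rangle\ge -m$, and the KL sum is simply discarded.
\item \emph{Movement.} It cannot be charged to the KL divergence as you suggest, because KL is quadratic in small moves. It must be charged to latency: bound $y_{t,i}\le(x_{t,i}+1)\ln\frac{x_{t,i}+1}{x_{t-1,i}+1}$ on increasing coordinates only. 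Then use stationarity, $\alpha'_t\ge 0$, and the conjugate-pair identity to get $\sum_i y_{t,i}\le f_t(x_t)$.
\item \emph{Sign of $\alpha_t$.} You also need $\alpha_t\ge 0$ for dual feasibility. It holds because some coordinate has $x_{t,i}\le x_{t-1,i}$.
\end{itemize}

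Your rounding sketch and offline part are in line with the paper. Let only upward threshold crossings force moves, via $\ell_{t,i}\le\bar x_{t,i}\le\max\{\ell_{t,i},\bar x_{t-1,i}\}$, so that repairing the budget costs nothing extra. With random thresholds the latency factor is $3$ rather than $2$, which is harmless.
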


We devote the rest of this section to proving \Cref{thm:dynamic}.
\Cref{sec:fractional-online-alg} gives a regularization-based fractional algorithm and derives the stated competitive ratio of $O(\sqrt{\log k})$ (\Cref{lem:fractional-competitive}).
% This algorithm performs an online mirror-descent update with entropy regularization to balance the processing time against the reconfiguration cost.
% \Cref{sec:competitive-analysis1} analyzes the competitive ratio of the fractional algorithm.

\Cref{sec:online-rounding} then presents an online rounding scheme that preserves both the latency cost and reconfiguration cost of the fractional algorithm up to a constant factor (\Cref{lem:rounding-competitive}).
Combining the two lemmas proves the online part of \Cref{thm:dynamic}; for the offline approximation, we solve the equality-budget version of the convex relaxation, i.e., with $\sum_i x_{t,i}=k$ for all $t$, and then apply \Cref{lem:rounding-competitive}.

\subsection{Regularized Greedy Algorithm}
\label{sec:fractional-online-alg}

Let $\eta = \max\{1, \ln(k+1)\}$ be the regularization parameter, chosen to optimize the competitive ratio from our analysis.
The regularizer is the generalized Kullback-Leibler (KL) divergence:
\[
    \KL(u \| v) ~\defeq~ \sum_{i \in [m]} \left( u_i \ln \frac{u_i}{v_i} - u_i + v_i \right) ~.
\]

For the configuration vectors in dynamic MoE serving, we use $u_i = x_{t,i} + 1$ and $v_i = x_{t-1,i} + 1$. In our algorithm, we use $\sum_i x_{t,i}=k$ as our budget constraint rather than $\sum_i x_{t,i} \le k$.
For every step $t \in [T]$, the linear terms cancel out since the total auxiliary budget is constant, i.e., $\sum_{i \in [m]} x_{t,i} = \sum_{i \in [m]} x_{t-1,i} = k$.
Thus, the divergence simplifies to the standard relative entropy.
% For the initial step $t=1$, the linear terms evaluate to a constant $k$ (since $x_0$ is all-zero) and do not affect the optimization.

\begin{algorithm}{Regularized Greedy Algorithm}
    At every step $t \in [T]$, select a feasible fractional configuration $x_t$ that minimizes the sum of the current latency cost and the divergence from the previous configuration:
    \[
        % \begin{equation}
        % \label{eqn:reg-greedy1}
        x_t ~=~ \argmin_{x \in \R_{\ge 0}^m \,:\, \sum_{i \in [m]} x_i = k} \Bigl( f_t(x) +  \KL(x + \mathbf{1} \,\|\, x_{t-1} + \mathbf{1}) \Bigr)
        ~.
        % \end{equation}
    \]
\end{algorithm}

% \subsection{Competitive Analysis}
% \label{sec:competitive-analysis1}
\label{sec:competitive-analysis1}

% Next, we show that this algorithm achives the desired competitive ratio in the relaxed fractional setting, via an online primal dual analysis, i.e., by constructing a feasible solution to the dual program and bounding the ratio between the primal and dual objectives.

\begin{lemma}
    \label{lem:fractional-competitive}
    Regularized Greedy is $O(\sqrt{\log k})$-competitive.
\end{lemma}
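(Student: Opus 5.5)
We prove the lemma by online primal-dual dual fitting. From the optimality conditions of each projection step of Regularized Greedy we build a solution $(\alpha,\beta,\gamma)$ to (\Dual). We then show that the algorithm's fractional cost $\sum_t f_t(x_t)+\sum_{t,i}y_{t,i}$, with $y_{t,i}=(x_{t,i}-x_{t-1,i})^+$, is at most $O(\sqrt{\log k})\cdot \Dual + O(k)$. By \Cref{lem:primal-dual-objective} and \Cref{lem:primal-objective} this gives a bound of $O(\sqrt{\log k})(\OPT+k)$, and the additive $k$ term is independent of $T$.

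\textbf{KKT of the update.} Fix $t$. There exist $g_t\in-\partial f_t(x_t)$, a multiplier $a_t$ for the equality budget, and multipliers $\mu_{t,i}\ge 0$ with $\mu_{t,i}x_{t,i}=0$, such that
\[
  g_{t,i} ~=~ \ln\frac{x_{t,i}+1}{x_{t-1,i}+1} + a_t - \mu_{t,i}~.
\]
By the conjugate pair property of \Cref{lem:property-of-f}, $f_t(x_t)=\langle g_t,x_t+\mathbf 1\rangle$, and $\hat f_t(g_t)=\langle g_t,\mathbf 1\rangle$ because $x_t$ attains the infimum. The natural dual candidate is $\gamma_t=g_t$ with movement increments $\beta_{t+1,i}-\beta_{t,i}\approx -\ln\frac{x_{t,i}+1}{x_{t-1,i}+1}$, i.e.\ $\beta_{t,i}=c-\ln(x_{t-1,i}+1)$. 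This is a potential bounded by a constant plus $\ln(k+1)$, and then the dual constraint reads $\gamma_{t,i}+\beta_{t+1,i}-\beta_{t,i}\le a_t$. The trouble is that this $\beta$ ranges over an interval of length $\eta=\ln(k+1)$ rather than $[0,1]$. We therefore scale: $\beta_{t,i}=\frac{1}{\eta}(\ln(k+1)-\ln(x_{t-1,i}+1))\in[0,1]$, $\gamma_t=g_t/\eta$, and $\alpha_t=a_t^+/\eta$. Complementary slackness handles coordinates with $x_{t,i}=0$, where $\mu_{t,i}\ge 0$ only helps, since $\ln\frac{1}{x_{t-1,i}+1}\le 0$ there. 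If $a_t<0$, I would shift all variables by a constant.

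\textbf{Bounding the primal.} The reconfiguration cost is controlled by the KL term. Using $\ln(u/v)\ge 1-v/u$ and summing $\sum_i(x_{t,i}-x_{t-1,i})\ln\frac{x_{t,i}+1}{x_{t-1,i}+1}$, we charge $\sum_i y_{t,i}$ against $\langle g_t - a_t\mathbf 1, x_{t}-x_{t-1}\rangle$-type terms. These telescope with the potential $\sum_i (x_{t,i}+1)\ln(x_{t,i}+1)$, and they are also bounded by $O(f_t(x_t))$ up to the budget term. The key identity is that the dual objective per step, $\hat f_t(\gamma_t)-k\alpha_t$, together with $f_t(x_t)=\langle g_t,x_t+\mathbf 1\rangle = \hat f_t(g_t)+\langle g_t,x_t\rangle$, relates the primal to the unscaled dual. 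The naive scaling by $1/\eta$ loses a factor $\eta=\log k$.

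\textbf{Main obstacle: recovering $\sqrt{\eta}$.} The scaling step is where the improvement must come from. I would not scale $\gamma$ by $1/\eta$, but only by $\theta=1/\sqrt{\eta}$, and split the loss. The sublinearity in \Cref{lem:property-of-f} gives $\hat f_t(\theta g_t)\ge\sqrt\theta\,\hat f_t(g_t)$, so scaling $\gamma$ costs only $\eta^{1/4}$ in the $\hat f$ term; more aggressive scaling $\theta=1/\eta$ costs $\sqrt\eta$. Violations of the dual constraint caused by the large log-ratio movements are absorbed by smoothness. When $\ln\frac{x_{t,i}+1}{x_{t-1,i}+1}$ is large, say exceeding $\sqrt{\eta}$, the allocation changed by a factor $e^{\sqrt\eta}$. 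Smoothness (part 2 of \Cref{lem:property-of-f}) then bounds the latency at the old configuration by that factor times the new one, which lets us charge those steps' $\gamma$-mass to the $\alpha$ or $\beta$ budget rather than enforce them in the constraint. Concretely, I would truncate increments $\beta_{t+1,i}-\beta_{t,i}$ at magnitude $1/\sqrt\eta$ per step and cap $\beta$ to $[0,1]$ by resetting. Each reset corresponds to a multiplicative change of $e^{\sqrt\eta}$ in $x_{\cdot,i}+1$ and is paid for by reconfiguration cost of order $x_{t,i}$, which the primal already covers. Balancing the truncation loss ($\sqrt\eta$) against the sublinearity loss (square root of the scaling) gives the $O(\sqrt{\log k})$ ratio. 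Verifying this charging cleanly is the step I expect to require the most care.
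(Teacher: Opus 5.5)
\textbf{There is a genuine gap: the decisive step is replaced by a scheme that does not work, and the step that does work is dismissed.}

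Your dual construction in the first step is exactly the paper's: $\gamma_t=g_t/\eta$, $\alpha_t=a_t/\eta$, $\beta_{t,i}=\frac1\eta\ln\frac{k+1}{x_{t-1,i}+1}$. No shift is needed, since $a_t\ge 0$ holds automatically: some coordinate has $x_{t,i}\le x_{t-1,i}$, and the KKT condition there forces $a_t\ge g_{t,i}\ge 0$. You then assert that this scaling loses a factor $\eta$, and substitute an unverified truncation scheme.

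The misjudgment traces to a sign error. Since $x_t$ attains the infimum defining $\hat f_t(g_t)$, we have $\hat f_t(g_t)=f_t(x_t)+\langle g_t,x_t\rangle=\langle g_t,2x_t+\mathbf 1\rangle$, not $\langle g_t,\mathbf 1\rangle$. The correct identity is $f_t(x_t)=\hat f_t(g_t)-\langle g_t,x_t\rangle$, and with it the $1/\eta$ scaling already costs only $\sqrt\eta$. The dual constraint is tight on every coordinate with $x_{t,i}>0$, and $\sum_i x_{t,i}=k$. Hence $k\alpha_t=\sum_i x_{t,i}(\gamma_{t,i}+\beta_{t+1,i}-\beta_{t,i})$, and
\[
\Dual=\sum_{t}\Bigl(\hat f_t\bigl(\tfrac{1}{\eta}g_t\bigr)-\tfrac{1}{\eta}\langle x_t,g_t\rangle\Bigr)+\sum_{t}\langle x_t,\beta_t-\beta_{t+1}\rangle .
\]
Sublinearity with $\theta=1/\eta$ shrinks the positive term by only $\sqrt\eta$, while the subtracted term shrinks by the full $\eta\ge\sqrt\eta$. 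So the first sum is at least $\frac{1}{\sqrt\eta}\sum_t\bigl(\hat f_t(g_t)-\langle x_t,g_t\rangle\bigr)=\frac{1}{\sqrt\eta}\sum_t f_t(x_t)$. The second sum telescopes to $\frac1\eta\sum_t\KL(x_t+\mathbf 1\,\|\,x_{t-1}+\mathbf 1)+\langle\mathbf 1,\beta_{T+1}-\beta_1\rangle\ge -m$.

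For the reconfiguration cost, use $u-v\le u\ln(u/v)$, the KKT equality on increasing coordinates, and $a_t\ge 0$:
\[
\sum_i y_{t,i}\le\sum_{i:\,x_{t,i}>x_{t-1,i}}(x_{t,i}+1)\ln\frac{x_{t,i}+1}{x_{t-1,i}+1}\le\langle g_t,x_t+\mathbf 1\rangle=f_t(x_t).
\]
Together these give $\Primal\le 2\sqrt\eta\,(\Dual+m)\le 2\sqrt\eta\,(\OPT+k+m)$. This direct computation is the missing idea.

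The route you propose in its place does not close. Suppose $\gamma$ is scaled by $1/\sqrt\eta$ while $\beta$ moves by $\frac1\eta\ln\frac{x_{t,i}+1}{x_{t-1,i}+1}$ per step so as to stay in $[0,1]$. Then every tight constraint is violated by $(\eta^{-1/2}-\eta^{-1})g_{t,i}$. That violation can only be absorbed by raising $\alpha_t$, at price $k$ per unit in the objective, or by letting $\beta$ leave $[0,1]$. Truncating a large decrease of $\beta$ (when $x_{t,i}+1$ grows by a factor $e^{\sqrt\eta}$), or resetting $\beta$ back up when it would drop below $0$, pushes $\gamma_{t,i}+\beta_{t+1,i}-\beta_{t,i}$ above $\alpha_t$ in the same way. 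Charging these violations to the primal's reconfiguration cost via smoothness is not a dual-fitting step, because it does not lower-bound $\Dual$. You also never show how an $\eta^{1/4}$ sublinearity loss and a $\sqrt\eta$ truncation loss combine to $\sqrt\eta$ rather than compound to $\eta^{3/4}$. As written, the proposal therefore does not establish the lemma.
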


\begin{proof}
    % The online primal-dual analysis proceeds in three steps: (i) deriving the optimality conditions for the algorithm's update, (ii) constructing a feasible dual solution based on the optimality conditions, and (iii) bounding the primal objective by the dual counterpart.
    %
    % \paragraph{KKT Conditions of Regularized Greedy Updates.}
    %
    \textbf{KKT Conditions of Regularized Greedy.}
    The update step is a convex minimization problem over the simplex-like constraint set $\big\{ x \in \R_{\ge 0}^m \mid \sum_{i \in [m]} x_i = k \big\}$. The Lagrangian is:
    \[
        \mathcal{L}_t(x_t, \alpha^{\prime}_t) ~=~ f_t(x_t) +  \sum_{i \in [m]} \biggl( (x_{t,i}+1) \ln \frac{x_{t,i}+1}{x_{t-1,i}+1} - x_{t,i} + x_{t-1,i} \biggr) + \alpha^{\prime}_t \biggl( \sum_{i \in [m]} x_{t,i} - k \biggr) ~,
    \]
    where $\alpha^{\prime}_t \in \R$ is the Lagrange multiplier for the budget constraint.
    The KKT conditions imply that at the optimum $x_t$, there are $\alpha^{\prime}_t \in \R$ and $\gamma^{\prime}_t \in \R_{\ge 0}^m$ such that:
    \begin{align}
        - \gamma^{\prime}_t                                                             & ~\in~ \partial f_t(x_t) ~,                        \\
        - \gamma^{\prime}_{t,i} + \ln \frac{x_{t,i}+1}{x_{t-1,i}+1} + \alpha^{\prime}_t & ~\ge~ 0                    & \forall i \in [m] ~,
        \label{eqn:kkt}
    \end{align}
    where the latter holds with equality if $x_{t,i} > 0$.

    \paragraph{Dual Construction and Feasibility.}
    %
    % Let $\alpha^{\prime}_t$ be the minimax solution from solving the Lagrangian of the Regularized Greedy decision.
    % Let $\gamma^{\prime}_t \defeq - \nabla f_t(x_t)$, which is a non-negative vector since $f_t$ is monotone decreasing.
    % We obtain that for any $i \in [m]$:
    % %
    % \begin{equation} \label{eq:kkt_alg1}
    %     \gamma^{\prime}_{t,i} -\ln \frac{x_{t,i}+1}{x_{t-1,i}+1} ~\le~ \alpha^{\prime}_t
    %     ~,
    % \end{equation}
    % %
    % where it holds with equality if $x_{t,i} > 0$.
    Comparing \Cref{eqn:kkt} with the first dual constraint, $\gamma_{t,i} + \beta_{t+1,i} - \beta_{t,i} \le \alpha_t$, it is natural to consider $\alpha_t = \alpha^{\prime}_t$, $\gamma_{t,i} = \gamma^{\prime}_{t,i}$, and $\beta_{t+1,i} - \beta_{t,i} = - \ln \frac{x_{t,i}+1}{x_{t-1,i}+1}$.
    However, this is infeasible in general, because the cumulative changes in $\beta$ may be as large as $\ln(k+1)$ when $x_{t,i}$ ranges from $0$ to $k$ over time, violating the bounds $0 \le \beta_{t,i} \le 1$.

    The standard fix in the regularized greedy framework (e.g., \cite{BuchbinderCN-SODA-2014,BuchbinderGMN-SODA-2019}) is to scale the regularizer.
    Indeed, changing the update step to minimize $f_t(x) + \frac{1}{\ln(k+1)} \KL(x + \mathbf{1} \,\|\, x_{t-1} + \mathbf{1})$ would make the above dual construction feasible, but result in an inferior competitive ratio of $O(\log(k+1))$.

    Instead, we directly scale the dual variables to restore dual feasibility, and exploit the smoothness of $f_t$ and the sublinearity of $\hat{f}_t$ to bound the loss in the competitive ratio.
    Let:
    \[
        \alpha_t \,\defeq\, \frac{1}{\eta} \alpha^{\prime}_t
        ~,\qquad
        \beta_{t,i} \,\defeq\, - \frac{1}{\eta} \ln \frac{x_{t-1,i} + 1}{k+1}
        ~,\qquad
        \gamma_t \,\defeq\, \frac{1}{\eta} \gamma^{\prime}_t
        ~.
    \]

    Then, \Cref{eqn:kkt} becomes identical to the first dual constraint:
    \begin{equation}
        \label{eqn:complementary-slackness}
        \gamma_{t,i} + \beta_{t+1,i} - \beta_{t,i} \le \alpha_t
        ~,
    \end{equation}
    with equality if $x_{t,i} > 0$.
    Meanwhile, $0 \le \beta_{t,i} \le 1$ follows from the definition and $0 \le x_{t,i} \le k$.

    % Scale the the second condition of \Cref{eqn:kkt} by a factor $\eta$, we can get
    % \begin{equation} \label{eq:scaled_kkt_alg1}
    %     \frac{1}{\eta} \gamma^{\prime}_{t,i} -\frac{1}{\eta} \ln \frac{x_{t,i}+1}{x_{t-1,i}+1} ~\le~ \frac{1}{\eta}\alpha^{\prime}_t
    %     ~.
    % \end{equation}

    Finally, we show that:
    \begin{equation}
        \label{eqn:alpha-nonegative}
        \alpha_t ~\ge~ 0
        ~,
    \end{equation}
    %
    % even though it is not imposed as a constraint in the original problem or the update step.
    which satisfies the last constraint of original dual program.
    Since $\|x_t\|_1 = \|x_{t-1}\|_1 = k$ following the algorithm, we have $x_{t,i} \le x_{t-1,i}$ for some $i$.
    Combining with $\gamma^{\prime}_{t,i} \ge 0$ and \Cref{eqn:kkt}, we get that $\alpha^{\prime}_t$ and thus $\alpha_t$ are non-negative.

    % Further, compare \Cref{eq:scaled_kkt_alg1} with the first dual constraint:
    % %
    % \begin{equation}
    % 	\label{eqn:dual-constraint1}
    % 	\gamma_{t,i} + \beta_{t+1,i} - \beta_{t,i} ~\le~ \alpha_t
    % 	~.
    % \end{equation}

    % Naturally, we want to let $\beta_{t+1,i} - \beta_{t,i} = - \frac{1}{\eta}\ln \frac{x_{t,i}+1}{x_{t-1,i}+1}$.
    % Combining with the ranges of the variables, i.e., $0 \le x_{t,i} \le k$ and $0 \le \beta_{t,i} \le 1$, we derive:
    % %
    % \[
    %     \beta_{t,i} \,\defeq\, - \frac{1}{\eta} \ln \frac{x_{t-1,i} + 1}{k+1} ~.
    % \]

    % Also, we let
    % %
    % \[
    %     \alpha_t = \frac{1}{\eta} \alpha^{\prime}_t, \quad \gamma_{t,i} = \frac{1}{\eta} \gamma^{\prime}_{t,i}~.
    % \]

    % In sum, the dual assignment defined above is feasible.
    % The algorithm's KKT condition in \Cref{eq:scaled_kkt_alg1} matches the dual constraint \eqref{eqn:dual-constraint1}, and holds with equality if $x_{t,i} > 0$.

    \paragraph{Comparing the Primal and Dual Objectives.}
    Recall that:
    \[
        \Primal = \sum_{t \in [T]} \Bigl( f_t(x_t) + \sum_{i \in [m]} y_{t,i} \Bigr)
        ~,\qquad
        \Dual ~=~ \sum_{t \in [T]} \big( \hat{f}_t(\gamma_t) - k\alpha_t \big) ~,
    \]
    where $y_{t,i} = \max(0, x_{t,i} - x_{t-1,i})$.
    The regularized greedy update ensures that the reconfiguration cost is bounded by the latency cost.
    This standard proof is deferred to \Cref{app:balance-cost}.

    \begin{lemma}
        \label{lem:balance-cost}
        For every step $t \in [T]$, $\sum_i y_{t,i} \le f_t(x_t)$.
    \end{lemma}

    % \begin{align*}
    %     \sum_i y_{t,i}
    %     % &
    %     % ~\le~ \eta \sum_{i \,:\, x_{t,i} > x_{t-1,i}} (x_{t,i}+1)(\gamma_{t,i} - \alpha_t) \\
    %     % &
    %     % ~\le~ \eta \sum_{i \,:\, x_{t,i} > x_{t-1,i}} (x_{t,i}+1) \gamma_{t,i} \tag{$\alpha_t \ge 0$} \\
    %     % &
    %     % ~\le~ \eta \sum_{i \in [m]} (x_{t,i}+1) \gamma_{t,i} \\
    %     % % \tag{$\gamma_{t,i} \ge 0$} \\
    %     &
    %     ~\le~  \sum_{i \in [m]} (x_{t,i}+1) \gamma^{\prime}_{t,i} \\
    %     % \tag{$\gamma_{t,i} = \frac{1}{\eta} \gamma^{\prime}_{t,i}$} \\
    %     &
    %     ~=~ \langle - \nabla f_t(x_t), x_t + \mathbf{1} \rangle\\
    %     &
    %     ~=~  f_t(x_t)
    %     \tag{using \Cref{lem:property-of-f}}
    %     ~.
    % \end{align*}

    Now consider the dual objective.
    Since the algorithm guarantees $\sum_i x_{t,i} = k$ and the KKT \Cref{eqn:complementary-slackness}, we rewrite the second term of the dual objective as:
    \begin{align*}
        \sum_{t \in [T]} k \alpha_t
         &
        ~=~
        \sum_{t \in [T]} \sum_{i \in [m]} x_{t,i} \big( \gamma_{t,i} + \beta_{t+1,i} - \beta_{t,i} \big)
        ~.
    \end{align*}

    Further by $\gamma_t = \frac{1}{\eta} \gamma^{\prime}_t$, the dual objective equals:
    \[
        \sum_{t \in [T]} \biggl( \hat{f}_t \Bigl(\frac{1}{\eta} \gamma^{\prime}_t\Bigr) - \Bigl\langle x_t, \frac{1}{\eta} \gamma^{\prime}_t \Bigr\rangle \biggr) + \sum_{t \in [T]} \langle x_t, \beta_t - \beta_{t+1} \rangle
        ~.
    \]

    By the sublinearity of $\hat{f}_t$ (\Cref{lem:property-of-f}) and $\gamma'_t \ge 0$, the first part can be bounded by:
    \[
        \frac{1}{\sqrt{\eta}} \sum_{t \in [T]} \Big( \hat{f}_t(\gamma^{\prime}_t) - \langle x_t, \gamma^{\prime}_t \rangle \Big) = \frac{1}{\sqrt{\eta}} \sum_{t \in [T]} f_t(x_t)
        ~.
    \]
    % \begin{align*}
    % 	\sum_{t \in [T]} \Big( \hat{f}_t(\frac{1}{\eta} \gamma^{\prime}_t) - \langle x_t, \frac{1}{\eta} \gamma^{\prime}_t \rangle \Big) + \sum_{t \in [T]} \langle x_t, \beta_t - \beta_{t+1} \rangle
    % 	&
    % 	~\geq~ \sum_{t \in [T]} \Big(\frac{1}{\sqrt{\eta}} \hat{f}_t(\gamma^{\prime}_t) - \frac{1}{\sqrt{\eta}}\langle x_t, \gamma^{\prime}_t \rangle \Big)+ \sum_{t \in [T]} \langle x_t, \beta_t - \beta_{t+1} \rangle\\
    %     &
    %     ~=~
    % 	\frac{1}{\sqrt{\eta}}\sum_{t \in [T]} f_t(x_t) + \sum_{t \in [T]} \langle x_t, \beta_t - \beta_{t+1} \rangle
    % 	~.
    % \end{align*}

    The second term is bounded by telescoping:
    \begin{align*}
        \sum_{t \in [T]} \langle x_t, \beta_t - \beta_{t+1} \rangle
         &
        ~=~
        \sum_{t \in [T]} \langle x_t + \mathbf{1}, \beta_t - \beta_{t+1} \rangle + \langle \mathbf{1}, \beta_{T+1} - \beta_1 \rangle \\
         &
        ~=~
        \frac{1}{\eta} \sum_{t \in [T]} \KL(x_t+\mathbf{1} \,\|\, x_{t-1}+\mathbf{1})  + \langle \mathbf{1}, \beta_{T+1} - \beta_1 \rangle
        ~.
    \end{align*}

    The KL-divergence is non-negative; the rest is at least $-m$, as $\beta_{T+1,i} \ge 0$ and $\beta_{1,i} \le 1$ for all $i$.

    Putting everything together, we conclude that the dual objective satisfies:
    \[
        \Dual ~\ge~ \frac{1}{\sqrt{\eta}}\sum_{t \in [T]} f_t(x_t) - m \ge \frac{1}{2\sqrt{\eta}} \Primal - m
        ~.
    \]

    % Since the total primal cost is bounded by $\ALG \le 2\sum_{t \in [T]} f_t(x_t)$, it follows that:
    % \[
    %     \ALG ~\le~ 2 \sqrt{\eta}\cdot (\Dual + m) ~\le~ 2 \sqrt{\eta} \cdot \OPT + 2 \sqrt{\eta} \cdot (m+k)
    %     ~.
    % \]

    Combining the above with $\Dual \le \Dual^\star$, $\Dual^\star \le \Primal^\star + k$ (\Cref{lem:primal-dual-objective}) and $\Primal^\star \le \OPT$ (\Cref{lem:primal-objective}), we get that:
    \[
        \Primal ~\le~ 2 \sqrt{\eta}\cdot (\OPT + m + k)
        ~.
    \]

    With $\eta = \max\{1, \ln(k+1)\}$, the competitive ratio is $O(\sqrt{\log k})$.
    % The additive term $ 2 \sqrt{\eta}\cdot (m+k)$ depends only on the system dimensions and is independent of the time horizon $T$, properly satisfying the definition of competitive ratio.
\end{proof}

\subsection{Online Rounding}
\label{sec:online-rounding}

Regularized Greedy only provides a fractional configuration $x_t \in \R_{\ge 0}^m$ at each step.
To obtain a valid deployment strategy, we must convert this into a feasible integral configuration $\bar{x}_t \in \Z_{\ge 0}^m$ satisfying $\sum_{i \in [m]} \bar{x}_{t,i} = k$, while ensuring that the total cost does not increase significantly.

In fact, the online rounding algorithm for the paging and caching problem \citep{BlumBK-FOCS-1999,BansalBN-SICOMP-2012} can be adapted to our setting.
Their algorithms, with appropriate changes, can simultaneously guarantee that (1) $\bar{x}_{t,i} \in \{\lfloor x_{t,i} \rfloor, \lceil x_{t,i} \rceil\}$ and $\E [ \bar{x}_{t,i} ] = x_{t,i}$ for all $i$, and (2) the expected reconfiguration cost is at worst double, i.e., $\mathbb E[\|\bar{x}_t - \bar{x}_{t-1}\|_1] \le 2 \|x_t - x_{t-1}\|_1$.
These conditions ensure that the objective of the rounded solution is at most twice that of the fractional solution.
These online rounding algorithms, however, are not computationally efficient, even in their original paging and caching settings.

Instead, we utilize the smoothness of $f_t$ to design a simpler polynomial-time online rounding procedure that loses at worst a constant factor compared to the fractional solution.

\begin{algorithm}{Lazy Threshold Rounding}
    At the beginning, sample thresholds $\theta_i \in [0,1)$ independently and uniformly for $i \in [m]$.\\[2ex]
    For each time step $t = 0, 1, \ldots, T$:
    \begin{itemize}
        \item Let $\ell_{t,i} = \max\bigl\{ 0, \lfloor x_{t,i}-\theta_i \rfloor \bigr\}$ for $i \in [m]$.
        \item If $t = 0$, let $\bar{x}_0 = x_0$.
        \item Otherwise, let $\bar{x}_t$ be a feasible solution of:
              \[
                  \ell_{t,i} \leq \bar{x}_{t,i} \leq \max \{ \ell_{t,i}, \bar{x}_{t-1,i} \} \quad \forall i \in [m], \qquad \sum_{i \in [m]} \bar{x}_{t,i} = k
                  ~.
              \]
    \end{itemize}
\end{algorithm}

The definition of $\bar{x}_t$ for $t \ge 1$ is feasible because the lower bounds sum to at most $\sum_i x_{t,i} = k$, and the upper bounds sum to at least $\sum_i \bar{x}_{t-1,i} = k$.

% \begin{lemma}
%     \label{lem:rounding-feasibility}
%     At every step $t$, the online rounding step has a feasible integral solution $\bar{x}_t$, and such a solution can be found in polynomial time.
% \end{lemma}

% \begin{proof}
%     The claim is immediate for $t=0$ because $x_0$ is integral and has total mass $k$.
%     For $t \ge 1$, let $u_{t,i} \defeq \max\{\ell_{t,i}, \bar{x}_{t-1,i}\}$.
%     Since $\ell_{t,i} \le x_{t,i}$ for every $i$, we have $\sum_i \ell_{t,i} \le k$.
%     Since $u_{t,i} \ge \bar{x}_{t-1,i}$ for every $i$, we have $\sum_i u_{t,i} \ge k$.
%     Thus there is an integral vector between the lower bounds $\ell_t$ and upper bounds $u_t$ with total mass $k$.
%     It can be found greedily by starting from $\ell_t$ and adding the remaining mass to coordinates that have not reached their upper bounds.
% \end{proof}

\begin{lemma}
    \label{lem:rounding-competitive}
    Lazy Threshold Rounding runs in polynomial-time, and ensures that at every step $t$:
    \begin{enumerate}
        \item $f_t(\bar{x}_t)\le 3 f_t(x_t)$; and
        \item $\mathbb E\left[\|\bar{x}_t - \bar{x}_{t-1}\|_1\right] \le \|x_t - x_{t-1}\|_1$.
    \end{enumerate}
\end{lemma}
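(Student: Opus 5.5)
The plan is to treat the two claims separately. Claim 1 is deterministic and follows from the lower bound $\bar{x}_t \ge \ell_t$ together with the smoothness of $f_t$ (\Cref{lem:property-of-f}). Claim 2 follows from the laziness of the upper bound $\max\{\ell_{t,i},\bar{x}_{t-1,i}\}$ plus a random-shift expectation argument.

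\textbf{Running time.} Computing $\ell_t$ is immediate. As already noted after the algorithm, the lower bounds sum to at most $k$ and the upper bounds sum to at least $k$. So a feasible $\bar{x}_t$ is found greedily: start from $\ell_t$ and add unit increments to coordinates below their upper bound until the total reaches $k$. This takes polynomial time.

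\textbf{Claim 1.} For each $i$, $\bar{x}_{t,i} \ge \ell_{t,i} \ge \max\{0, \lfloor x_{t,i}-\theta_i\rfloor\}$. Since $\theta_i<1$, we have $\lfloor x_{t,i}-\theta_i\rfloor > x_{t,i}-2$, so $1+\bar{x}_{t,i} \ge \max\{1, x_{t,i}-1\}$. We then check $1+x_{t,i} \le 3(1+\bar{x}_{t,i})$ in two cases.
\begin{itemize}
\item If $x_{t,i}\le 2$, then $1+x_{t,i}\le 3 \le 3(1+\bar{x}_{t,i})$.
\item If $x_{t,i}>2$, then $1+x_{t,i}\le 3(x_{t,i}-1)$, which is equivalent to $x_{t,i}\ge 2$.
\end{itemize}
Applying the smoothness property of \Cref{lem:property-of-f} with $\theta=3$, $x=x_t$ and $x'=\bar{x}_t$ gives $f_t(\bar{x}_t)\le 3f_t(x_t)$.

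\textbf{Claim 2.} Both $\bar{x}_t$ and $\bar{x}_{t-1}$ have total mass $k$, so $\|\bar{x}_t-\bar{x}_{t-1}\|_1 = 2\sum_i(\bar{x}_{t,i}-\bar{x}_{t-1,i})^+$.

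\emph{Step 1 (deterministic bound on increases).} Coordinate $i$ can increase only if $\ell_{t,i}>\bar{x}_{t-1,i}$, because of the upper bound $\max\{\ell_{t,i},\bar{x}_{t-1,i}\}$. In that case the increase is at most $\ell_{t,i}-\bar{x}_{t-1,i}\le \ell_{t,i}-\ell_{t-1,i}$. This uses the invariant $\bar{x}_{t-1}\ge \ell_{t-1}$, which holds by construction for $t-1\ge 1$. For $t-1=0$ it holds because $x_0$ is integral, so $\ell_{0,i}\le \lfloor x_{0,i}-\theta_i\rfloor^+\le x_{0,i}=\bar{x}_{0,i}$. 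Hence
\[
\|\bar{x}_t-\bar{x}_{t-1}\|_1\le 2\sum_i(\ell_{t,i}-\ell_{t-1,i})^+ .
\]

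\emph{Step 2 (expectation over the thresholds).} Write $\ell_{t,i}=\#\{j\in\Z_{\ge1} : \theta_i+j\le x_{t,i}\}$. This is nondecreasing in $x_{t,i}$. Therefore $(\ell_{t,i}-\ell_{t-1,i})^+$ equals the number of points of the shifted lattice $\theta_i+\Z_{\ge 1}$ lying in the interval $(x_{t-1,i},x_{t,i}]$, and it is zero when $x_{t,i}\le x_{t-1,i}$. Since $\theta_i$ is uniform on $[0,1)$, the expected number of lattice points in any interval equals the length of its intersection with $[1,\infty)$. This gives $\E[(\ell_{t,i}-\ell_{t-1,i})^+]\le (x_{t,i}-x_{t-1,i})^+$.

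\emph{Step 3 (summing).} Summing over $i$ and using that $x_t$ and $x_{t-1}$ have equal mass,
\[
\E\|\bar{x}_t-\bar{x}_{t-1}\|_1\le 2\sum_i(x_{t,i}-x_{t-1,i})^+=\|x_t-x_{t-1}\|_1 .
\]

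\textbf{Main obstacle.} The delicate step is Step 1 of Claim 2: charging each increase of $\bar{x}$ to an increase of $\ell$. It relies on the invariant $\bar{x}_{t-1}\ge\ell_{t-1}$, including at the base case $t=0$.
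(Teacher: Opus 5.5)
Your proposal is correct and follows the paper's proof essentially step by step: the bound $\bar{x}_{t,i} \ge \max\{0, x_{t,i}-2\}$ combined with smoothness at $\theta = 3$ handles the latency claim, and the charge $(\bar{x}_{t,i}-\bar{x}_{t-1,i})_+ \le (\ell_{t,i}-\ell_{t-1,i})_+$ followed by the expected threshold-crossing count handles reconfiguration. The differences are cosmetic: you fill greedily upward from $\ell_t$ rather than downward from the upper bounds, and you spell out details the paper leaves implicit, namely the base case $\bar{x}_0 \ge \ell_0$, the two-case check behind the factor $3$, and the factor-$2$ conversion between incremental and total movement.
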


\begin{proof}
    \emph{(Running Time)~}
    We can greedily start from the upper bounds, and repeatedly reduce a variable strictly greater than its lower bound by $1$ until the total mass is $k$.

    \bigskip

    \noindent
    \emph{(Latency Cost)~}
    For every expert $i$, the rounding rule gives:
    \[
        \bar{x}_{t,i} \ge \ell_{t,i}
        =
        \max\{0,\lfloor x_{t,i}-\theta_i\rfloor\}
        \ge
        \max\{0,x_{t,i}-2\}
        ~.
    \]
    Hence, $3(1+\bar{x}_{t,i}) \ge 1+x_{t,i}$.
    By \Cref{lem:property-of-f}, we have $f_t(\bar{x}_t) \le 3 f_t(x_t)$.

    \bigskip

    \noindent
    \emph{(Reconfiguration Cost)~}
    The algorithm maintains $\bar{x}_{s,i}\ge \ell_{s,i}$ for every step $s$ and expert $i$.
    Fix a coordinate $i$.
    Since $\bar{x}_{t-1,i} \ge \ell_{t-1,i}$, the rounding rule implies
    \[
        (\bar{x}_{t,i}-\bar{x}_{t-1,i})_+
        \le
        (\ell_{t,i}-\ell_{t-1,i})_+
        ~.
    \]

    If $x_{t,i}\le x_{t-1,i}$, then $(\ell_{t,i}-\ell_{t-1,i})_+=0$.
    Otherwise, $(\ell_{t,i}-\ell_{t-1,i})_+$ is at most the number of thresholds $\theta_i + j$ for $j \in \Z_{\ge 0}$ that are crossed by the interval $(x_{t-1,i},x_{t,i}]$.
    Since $\theta_i$ is uniform over one full period $[0,1)$, the expected
    number of crossed thresholds is at most the interval length:
    \[
        \mathbb E\left[(\bar{x}_{t,i}-\bar{x}_{t-1,i})_+\right]
        \le
        (x_{t,i}-x_{t-1,i})_+
        ~.
    \]

    Summing over $i$ proves the desired bound for incremental cost, which is exactly half the total reconfiguration cost given that the total allocation is fixed at $k$ at all times.
\end{proof}

% \begin{theorem}
%     The Regularized Greedy Algorithm, combined with the online rounding, is $O(\sqrt{\log k})$-competitive against the optimal integral offline adversary.
% \end{theorem}

% \begin{proof}
%     Let $\ALG$ be the expected total cost of the rounded integral algorithm.
%     By \Cref{lem:service-round,lem:movement-round}, we have
%     \begin{align*}
%         \ALG & = \sum_{t=1}^T \Big( \mathbb{E}[f_t(\bar{x}_t)] + \mathbb{E}[\| \bar{x}_t - \bar{x}_{t-1} \|_1] \Big) \\
%              & \le \sum_{t=1}^T \Big( 3 f_t(x_t) + \| x_t - x_{t-1} \|_1 \Big)                                       \\
%              & = \sum_{t=1}^T \Big( 3 f_t(x_t) + 2\sum_{i \in [m]} y_{t,i} \Big)                                     \\
%              & \le 5\sum_{t=1}^T f_t(x_t) ~,
%     \end{align*}
%     where the last inequality uses \Cref{lem:balance-cost}.
%     The fractional analysis gives
%     \[
%         \sum_{t=1}^T f_t(x_t)
%         \le
%         \sqrt{\eta}\cdot(\OPT+k+m)
%         ~.
%     \]
%     Therefore
%     \[
%         \ALG
%         \le
%         5\sqrt{\eta}\cdot(\OPT+k+m)
%         ~.
%     \]
%     With $\eta = \max\{1,\ln(k+1)\}$, the competitive ratio is $O(\sqrt{\log k})$.
% \end{proof}

\section{Hardness for Online Dual Problem}
\label{sec:hardness}

This section proves that the competitive ratio in \Cref{thm:dynamic} is asymptotically tight for the dual problem.
Recall the latency cost $f_t$ and its signed Fenchel conjugate $\hat{f}_t$:
\[
    f_t(x_t) = \max_i \frac{r_{t,i}}{x_{t,i} + 1}
    ~,
    \qquad
    \hat{f}_t(\gamma_t) ~=~ \inf_{x_t \ge 0} \big( f_t(x_t) + \langle x_t , \gamma_t \rangle \big)
    ~.
\]

\paragraph{Online Dual Problem.}
At each time step $t\in[T]$, after observing the workload vector $r_t \in \R^{m}_{\geq 0}$, the algorithm chooses:
\[
    \gamma_t\in\R_{\ge0}^m
    ~,
    \qquad
    \alpha_t\ge0
    ~,
    \qquad
    \beta_{t+1}\in[0,1]^m
    ~,
\]
where $\beta_t$ is already fixed from the previous step $t-1$. The chosen variables
must satisfy:
\[
    \gamma_{t,i}+\beta_{t+1,i}-\beta_{t,i}\le \alpha_t
    \qquad
    \forall i\in[m]
    ~.
\]

The goal of this problem is to maximize the accumulated dual value:
\[
    \Dual
    =
    \sum_{t\in [T]} \widehat f_t(\gamma_t)
    -
    \sum_{t\in [T]} k\alpha_t
    ~.
\]

% offline dual optimum $\Dual^\star$, is the maximum value of the same dual program when the whole request sequence is known in advance.

% The online dual construction problem is the certificate-generation step of an online primal-dual analysis. 
% A primal-dual algorithm proves competitiveness by charging the online primal cost to the value of a dual solution constructed online; hence the best ratio available to this proof framework is constrained by the online solvability of the dual program itself. 
The online primal-dual analysis in \Cref{thm:dynamic} maintains feasible solutions for both primal and dual, with objective values differing by at most an $O(\sqrt{\log k})$ factor.
Hence, the $O(\sqrt{\log k})$ competitive ratio applies to the primal and dual problems alike.
% Since we use online primal dual framework solving both primal and dual program online in \Cref{thm:dynamic}, there is a $O(\ln k)$ competitive algorithm solving the online dual problem.
The theorem below shows that $\sqrt{\log k}$ is an inherent information-theoretic barrier for the online dual problem, even if we allow computationally unlimited algorithms.
% This should be read as a limitation of this online dual-fitting framework, not as an algorithm-independent lower bound for the original primal problem.

\begin{theorem}
    \label{thm:online-barrier}
    There is no online algorithm for the online dual problem with competitive ratio better than $\frac{1}{32}\sqrt{\ln k}$.
\end{theorem}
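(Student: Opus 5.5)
The plan is to build a family of instances with a single busy expert and an unknown stopping time. The dual constraint then reduces the online dual problem to spending a unit budget online (the decrease of $\beta$) against a concave payoff. The resulting ratio will be bounded by a dyadic Cauchy--Schwarz argument over $\Theta(\ln k)$ scales.

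\textbf{Instance and reduction.} Take workload $r_t = (1,0,\dots,0)$ at every step $t \le T$. The stopping time $T \in \{1,2,\dots,N\}$ is unknown to the algorithm, and I take $N \approx k$, so that $\ln N = \Theta(\ln k)$.
\begin{itemize}
  \item First compute $\hat f_t$ on this instance. Coordinates with zero workload contribute nothing, because the infimum puts $x_i = 0$, so $\gamma_{t,i}$ for $i \ge 2$ is useless. For coordinate $1$,
  \[
      \hat f_t(\gamma) = \inf_{x \ge 0} \Big(\frac{1}{1+x} + \gamma x\Big) = 2\sqrt{\gamma} - \gamma \quad \text{for } \gamma \le 1,
  \]
  so in particular $\hat f_t(\gamma) \le 2\sqrt{\gamma}$.
  \item Write $d_t = (\beta_{t,1} - \beta_{t+1,1})_+$. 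The constraint gives $\gamma_{t,1} \le \alpha_t + d_t$ with $\sum_t d_t \le \beta_{1,1} \le 1$.
  \item The step-$t$ contribution is at most $2\sqrt{d_t + \alpha_t} - k\alpha_t \le 2\sqrt{d_t} + 2\sqrt{\alpha_t} - k\alpha_t \le 2\sqrt{d_t} + 1/k$, maximizing over $\alpha_t$.
  \item Hence any online algorithm has $\Dual(T) \le 2\sum_{t \le T} \sqrt{d_t} + T/k \le 2\sum_{t \le T} \sqrt{d_t} + 1$ for $T \le N = k$.
  \item Offline, knowing $T$, set $\beta_{1,1} = 1$, $d_t = 1/T$, and $\alpha = 0$. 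This gives $\Dual^\star(T) \ge T(2/\sqrt{T} - 1/T) = 2\sqrt{T} - 1$.
\end{itemize}

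\textbf{Core inequality.} Suppose the algorithm is $\rho$-competitive. Then for every $T \le N$ we need $s_T \defeq \sum_{t \le T} \sqrt{d_t} \ge c\sqrt{T}/\rho$ for a small constant $c$, which holds once $T$ exceeds a constant.
\begin{itemize}
  \item Split $[N]$ into dyadic blocks $B_j = (2^{j-1}, 2^j]$ and let $a_j = \sum_{t \in B_j} \sqrt{d_t}$.
  \item Cauchy--Schwarz gives $\sum_t d_t \ge \sum_j a_j^2 / 2^{j-1}$.
  \item The constraint is $\sum_{i \le j} a_i \ge c\, 2^{j/2}/\rho$ for all $j \le \log_2 N$.
  \item Use only every other scale, or a geometric-sum argument, to show that the increments satisfy $a_j \gtrsim 2^{j/2}/\rho$ for a constant fraction of the indices $j$. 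Concretely, if $\sum_{i \le j} a_i \ge c\,2^{j/2}/\rho$ while the earlier mass is only $O(2^{(j-1)/2}/\rho)$, the increment must be $\Omega(2^{j/2}/\rho)$ on many $j$. Alternatively, minimize $\sum_j a_j^2 2^{-j}$ under the prefix constraints directly via Lagrange duality.
  \item Either way, $\sum_j a_j^2 / 2^{j-1} \gtrsim \log_2 N / \rho^2$.
  \item Combined with $\sum_t d_t \le 1$, this yields $\rho = \Omega(\sqrt{\log N}) = \Omega(\sqrt{\ln k})$. Tracking constants should give $\frac{1}{32}$.
\end{itemize}

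\textbf{Main obstacle.} The delicate step is the prefix-constrained minimization. The constraint is on prefix sums, not on individual blocks, so an adversarial algorithm could front-load mass. I would handle this by an averaging argument. If the prefix bound is tight only at sparse scales, then the squared-mass cost of reaching $2^{j/2}/\rho$ from below is still a constant fraction of $2^{j}/\rho^2 \cdot 2^{-j}$ at each of $\Omega(\log N)$ scales.

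A secondary issue is the additive slack in the competitive-ratio definition, together with the $+1$ from the $\alpha$ terms. I would absorb both by restricting to $T \ge T_0$ for a suitable constant $T_0$. If a truly additive constant $C$ is allowed, I would instead scale workloads to $r$ and $N$ so that the values $\sqrt{rT}$ dominate $C$, keeping $r/k^2 \le 1/N$ so that $\alpha$ stays useless.
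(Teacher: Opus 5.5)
Your single-phase mechanism is the paper's: one busy expert, an unknown stopping time, $\hat f_t(\gamma)\le 2\sqrt{\gamma}$, and a unit $\beta$-budget. The step you call the main obstacle is exactly the discrete Hardy inequality $\sum_{t\le N}(S_t/t)^2\le 4\sum_{t\le N} d_t$ with $S_t=\sum_{s\le t}\sqrt{d_s}$. The paper applies this directly, and it absorbs front-loading with no dyadic bookkeeping.

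\textbf{The gap is the additive constant.} The competitive ratio lets $C$ depend on $k$ and $m$, and the paper's proof beats every $C$ at fixed $k$. A single phase cannot do this, however you scale it. Summing the constraint on the busy coordinate gives $G:=\sum_t\gamma_{t,1}\le 1+\sum_t\alpha_t$. Hence any strategy, offline included, earns at most $\max_{G\ge 0}(2\sqrt{rTG}-k(G-1))=rT/k+k$ in a phase with workload $r$. The online rule $\gamma_{t,1}=\alpha_t=r/k^2$, which never touches $\beta$, earns $rT(k-1)/k^2$. So this trivial algorithm satisfies $\Dual^\star\le\frac{k}{k-1}\Dual+k$ on your entire instance family.

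Both of your fixes fail against this. Your scaling keeps $rN\le k^2$, which caps $\sqrt{rT}$ at $k$; relaxing that constraint lets $\alpha$ dominate both sides. Restricting to $T\ge T_0$ only handles $C$ small compared with $\sqrt{N}$.

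\textbf{The missing idea is unbounded repetition with a cheap refill of the offline $\beta$-budget.} With one busy expert, refilling $\beta_{\cdot,1}$ costs $k$ per phase, as much as the whole additive advantage a phase can produce. The paper instead uses $m=k+1$ experts requested in cyclic order.
\begin{itemize}
\item At the first step of each phase, the offline raises every idle expert's $\beta$ by $1/k$, at cost $k\cdot\frac1k=1$. Each expert is therefore back at $\beta=1$ when it is requested again.
\item The adversary stops each phase adaptively at the first $\tau\ge 4$ with $D_\tau\le 16\sqrt{\tau/\ln k}$. This gives $D_r^\star\ge\frac{\sqrt{\ln k}}{32}D_r+1$ per phase, and $\lceil C\rceil$ phases suffice.
\end{itemize}

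\textbf{A second error.} Your claim $\sum_t d_t\le\beta_{1,1}\le 1$ is false. The online can raise $\beta$ via $\beta_{t+1}\le\beta_t+\alpha_t$, so only $\sum_t d_t\le 1+\sum_t\alpha_t$ holds. Your bound $2\sqrt{\alpha_t}-k\alpha_t\le 1/k$ throws away the entire penalty that must pay for this. Keep part of it, e.g.\ $2\sqrt{\alpha_t}-k\alpha_t\le\frac{2}{k}-\frac{k}{2}\alpha_t$, or argue as the paper does via $D_K\le 2\sqrt{KG}-k(G-1)$. This matters especially once there are several phases, where the online may pre-pay $\alpha$ in earlier phases to refill $\beta$.
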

% The lower bound can be interpreted as a stopping-time budget-allocation problem. In a phase requesting a fixed expert, the online construction decides how much dual mass $\gamma_t$ to place at each time. The benefit of this mass is concave, roughly $\sqrt{\gamma_t}$. The variable $\beta_{t,i}$ provides one unit of free total mass over the phase, since its range is bounded by $[0,1]$. The slack variable $\alpha_t$ can create additional mass, but each unit costs $k$ in the objective.

It suffices to consider deterministic algorithms.
The dual problem has a convex feasible region and a concave maximization objective, so randomization cannot improve the expected value beyond the best deterministic action for a fixed history.
Against deterministic algorithms, without loss of generality we can consider an adaptive adversary, who at each time observes the dual variables already chosen by the online algorithm and decides the workload vector.

We prove the hardness with $m = k+1$ experts, assuming $k \ge 1024$, since excluding finitely many small $k$ does not change the asymptotic bound.

\paragraph{Single Phase Instance.}
We start by describing a gadget that produces a $\sqrt{\log k}$ multiplicative gap between the online and offline dual objectives.
Repeating this gadget for sufficiently many times yields the lower bound even with the additive term in the definition of competitive ratio.

Fix an expert $q$.
The instance repeatedly requests only $q$ for $\tau$ steps where $\tau$ will be determined adaptively according to the algorithm's decisions.
% We state the single phase instance as consecutive requests only to expert $q$. 
Formally, for every time step $t$ in this instance:
\[
    r_{t,i} = \begin{cases}
    1 & \mbox{if $i=q$};              \\
    0 & \mbox{otherwise.}
    \end{cases}
\]

Since only expert $q$ has non-zero workload, the latency cost function and signed Fenchel conjugate function can be written as:
\[
    f_t(x_t)=\frac{1}{1+x_{t,q}} ~,\qquad
    \hat{f}_t(\gamma_t) =
    \begin{cases}
        2\sqrt{\gamma_{t,q}} - \gamma_{t,q}, & \gamma_{t,q}\in[0,1] ~, \\[0.5ex]
        1,                                   & \gamma_{t,q}\ge 1 ~.
    \end{cases}
\]
In either case, we have $\hat{f}_t(\gamma_t)\le 2\sqrt{\gamma_{t,q}}$, which we will use in the arguments below.

Define the objective value up to time $t$ as:
\[
    D_t \defeq \sum_{t' \in [t]} \widehat f_{t'}(\gamma_{t'})
    -
    \sum_{t' \in [t]} k\alpha_{t'}
\]

Let the stopping time $\tau$ be the first step $\tau \ge 4$ satisfying:
% Denote the online value as $D_\tau$ at time step $\tau$. Once $\tau \ge 4$ and 
\[
    D_\tau
    \le
    16 \sqrt{\frac{\tau}{\ln k}}
    ~.
\]
% , the phase stops at time $\tau$.

% Since only expert $q$ is requested, the dual reward depends only on the coordinate $\gamma_{t,q}$. Therefore, in the single-phase analysis, we write
% \[
%     \gamma_t \equiv \gamma_{t,q}
%     ~.
% \]
% Under this notation,
% \[
%     \widehat f_t(\gamma_t)
%     =
%     2\sqrt{\gamma_t}-\gamma_t
%     \qquad
%     \text{for }0\le \gamma_t\le1
%     ~.
% \]

% Consider the case when $n=k + 1$. 

\begin{lemma}
    \label{lem:dual-algo}
    For any online algorithm,
    % the stopping time $\tau\in [4,K]$ is existed for a single phase instance.
    the stopping time $\tau$ is well-defined.
\end{lemma}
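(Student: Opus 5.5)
The plan is to argue by contradiction. Suppose that for some online algorithm the stopping condition never triggers up to a horizon $N$, i.e.\ $D_t > 16\sqrt{t/\ln k}$ for every $4 \le t \le N$. I will take $N = k^2$ (any polynomial in $k$ of large enough degree should also work). Throughout the phase only expert $q$ is requested, so I write $\gamma_t$ for $\gamma_{t,q}$ and $A_t \defeq \sum_{s \le t} \alpha_s$. From the bound $\hat f_s(\gamma_s) \le 2\sqrt{\gamma_s}$ noted above,
\[
S_t \defeq \sum_{s\le t} 2\sqrt{\gamma_s} ~\ge~ D_t + kA_t ~\ge~ D_t ~.
\]
Summing the dual constraint for coordinate $q$ over $s \le t$ telescopes $\beta$. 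Since $\beta \in [0,1]$, this gives the \emph{mass bound}
\[
\sum_{s\le t}\gamma_s ~\le~ A_t + 1 ~.
\]

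\textbf{Step 1: $\alpha$ is negligible.} By Cauchy--Schwarz, $S_N \le 2\sqrt{N(1+A_N)}$. The assumption also gives $D_N > 0$, so $kA_N \le S_N \le 2\sqrt{N(1+A_N)}$. With $N = k^2$ this yields $A_N = O(1)$; a direct computation should give $A_N \le 5$ once $k \ge 1024$. Hence the total mass satisfies $\sum_{s\le N}\gamma_s \le A_N + 1 = O(1)$.

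\textbf{Step 2: the cumulative lower bounds force large mass.} This is the key step. A single Cauchy--Schwarz over all $N$ steps only gives $\sum \gamma \gtrsim N/\ln k$ times a tiny factor, which is useless. The point is instead that the lower bound $S_t \ge g(t) \defeq 16\sqrt{t/\ln k}$ must hold at every $t$ simultaneously. To exploit this, I will use the weighted Cauchy--Schwarz inequality
\[
\sum_{s\le N}\frac{\sqrt{\gamma_s}}{\sqrt s} ~\le~ \sqrt{\sum_{s \le N}\gamma_s}\cdot\sqrt{H_N} ~,
\]
where $H_N$ is the $N$-th harmonic number. I then bound the left side from below by Abel summation:
\[
\sum_{s\le N}\frac{\sqrt{\gamma_s}}{\sqrt s} ~=~ \sum_{t<N}\frac{S_t}{2}\Bigl(\frac1{\sqrt t}-\frac1{\sqrt{t+1}}\Bigr)+\frac{S_N}{2\sqrt N} ~.
\]
All weights are nonnegative, so I may replace $S_t$ by $g(t)$ for $t \ge 4$. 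Using $\frac1{\sqrt t}-\frac1{\sqrt{t+1}} \ge \frac{1}{2(t+1)^{3/2}}$, each term is at least of order $\frac{4}{\sqrt{\ln k}}\cdot\frac1t$. Summing over $4 \le t < N$ then gives a left side of at least roughly $\frac{4}{\sqrt{\ln k}}(\ln N - O(1))$. Combining this with $H_N \le \ln N + 1$ yields
\[
\sum_{s\le N}\gamma_s ~\gtrsim~ \frac{16\,\ln N}{\ln k}\,(1-o(1)) ~,
\]
which is about $32$ for $N = k^2$.

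\textbf{Step 3: conclude.} Step 2 gives total mass about $32$, while Step 1 gives at most $A_N + 1 \le 6$. This contradiction shows that some $\tau \in [4, k^2]$ satisfies the stopping condition, so $\tau$ is well-defined and bounded. The main obstacle is Step 2, specifically the constant bookkeeping. I need to handle the discrete difference $t^{-1/2}-(t+1)^{-1/2}$, the missing terms for $t<4$, and the gap between $H_N$ and $\ln N$, and still verify that the lower bound beats $A_N+1$ for $k\ge 1024$. If the constants turn out too tight, I will first enlarge $N$ to $k^{c}$ for a larger $c$, since the required mass grows like $c$ while $A_N$ stays $O(1)$ as long as $N \ll k^2$ up to the constant in Step 1 (or I would redo Step 1 with the sharper bound $kA_N \le S_N$).
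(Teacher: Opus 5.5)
Your proof is correct and has the same skeleton as the paper's. You assume the stopping rule never fires up to a horizon of order $k^2$, then trap the total mass $G=\sum_{s\le N}\gamma_{s,q}$ between two bounds. The upper bound is $O(1)$, from $D_N>0$, Cauchy--Schwarz, and the telescoped constraint $G\le 1+A_N$. The lower bound is of order $\ln N/\ln k$ and uses the partial-sum lower bounds at all times simultaneously. You differ in the lower-bound tool and the horizon. The paper applies Hardy's inequality $\sum_t\gamma_{t,q}\ge\frac14\sum_t\bigl(\frac1t\sum_{t'\le t}\sqrt{\gamma_{t',q}}\bigr)^2$ with $K=\lfloor k^2/64\rfloor$, obtaining $G\ge 16$ against $G<4/3$. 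You instead pair $\sqrt{\gamma_s}$ with the weight $s^{-1/2}$ via weighted Cauchy--Schwarz and Abel summation. That route is self-contained, in effect a Hardy-type bound proved by hand for the $\sqrt t$ profile, and with $N=k^2$ it leaves a wider margin. Hardy's inequality buys the paper a one-line lower bound valid for any partial-sum profile. The smaller horizon $k^2/64$ buys it a nearly tight upper bound on the mass.

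The bookkeeping you left open closes easily.
\begin{itemize}
\item Use the exact identity $\sqrt t\,\bigl(t^{-1/2}-(t+1)^{-1/2}\bigr)=1-\sqrt{t/(t+1)}$, which is at least $\frac{1}{2(t+1)}$, instead of your cruder $(t+1)^{-3/2}$ estimate, and keep the boundary term.
\item This gives $\sum_{s\le N}\sqrt{\gamma_s/s}>\frac{4}{\sqrt{\ln k}}\bigl(H_N-H_4+2\bigr)=\frac{4}{\sqrt{\ln k}}\bigl(H_N-\frac{1}{12}\bigr)$.
\item Hence $G>\frac{16(H_N-1/12)^2}{H_N\ln k}\ge\frac{16(H_N-1/6)}{\ln k}>32-\frac{8}{3\ln k}$.
\item Meanwhile Step~1 gives $A_N\le 2+2\sqrt2$ independently of $k$, so $G\le 3+2\sqrt2<6$.
\end{itemize}

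One caution: your fallback of enlarging $N$ to $k^c$ with $c>2$ would fail. Step~1 would then only yield $A_N=O(N/k^2)$, which grows polynomially, while the lower bound grows only like $\ln N$. The horizon must stay $\Theta(k^2)$, and only its leading constant is a free parameter; the paper shrinks it to $1/64$ to make the upper bound tight. Since the constants already close at $N=k^2$, you do not need the fallback.
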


\begin{proof}
    % We now define the adversary's stopping rule. After each time $T\ge 4$, 
    We prove a stronger claim that $\tau$ is at most:
    \[
        K=\left\lfloor \frac{k^2}{64}\right\rfloor
        ~.
    \]

    % We claim this rule must stop by time $K$. Suppose not. 
    % Suppose that the requested expert is $q$ in this single phase.
    Assume for contradiction that for every $\tau\in[4,K]$,
    \[
        D_\tau > 16 \sqrt{\frac{\tau}{\ln k}}
        ~.
    \]

    We next examine $\sum_{t \in [K]} \gamma_{t,q}$ and derive contradictory lower and upper bounds for it.

    \paragraph{Lower Bound.}
    The online algorithm needs to stay competitive for every $\tau \in [4, K]$, and hence is forced to overcommit early when $\tau$ is small.
    We next argue that this overcommitment implies a lower bound on $\sum_{t\in [K]} \gamma_{t,q}$.
    % We first show that the total mass $\sum_{t\in [K]} \gamma_{t,q}$ must be large if $D_K$ satisfy the desired bound above.
    % 
    Since $\widehat f_t(\gamma_t)\le 2\sqrt{\gamma_{t,q}}$ and $\alpha_t \ge 0$, we have $D_\tau \le 2\sum_{t \in [\tau]} \sqrt{\gamma_{t,q}}$ for $\tau \in [4, K]$.
    Therefore:
    \begin{equation}
        \label{eqn:sqrt-gamma-bound}
        \sum_{t \in [\tau]} \sqrt{\gamma_{t,q}}
        >
        8\sqrt \frac{\tau}{\ln k}
        ~.
    \end{equation}

    By Hardy's inequality:
    \[
        \sum_{t \in [K]} \gamma_{t,q}
        \ge
        \frac14
        \sum_{t \in [K]}
        \left(\frac{\sum_{t'\in[t]} \sqrt{\gamma_{t',q}}}{t}\right)^2
        ~.
    \]

    Combining with \Cref{eqn:sqrt-gamma-bound}, we obtain:
    \[
        \begin{aligned}
            \sum_{t \in [K]}\gamma_{t,q}
             & \ge
            \frac{16}{\ln k}
            \sum_{t=4}^K\frac1t
            ~.
        \end{aligned}
    \]

    Using $k\ge1024$ and $K=\lfloor k^2/64\rfloor$:
    \[
        \sum_{t=4}^K\frac1t
        \ge
        \int_{4}^{K+1}\frac{dx}{x}
        =
        \ln(K+1)-\ln 4
        \ge
        \ln k
        ~.
    \]

    Thus, we conclude that:
    \[
        \sum_{t \in [K]}\gamma_{t,q}\ge16
        ~.
    \]

    \paragraph{Upper Bound.}
    We next show that $\sum_{t \in [K]} \gamma_{t,q}$ cannot be too large, otherwise $D_K$ will be too small because of the $-k \sum_{t \in [K]} \alpha_t$ term.
    % We next show the upper bound of $\sum_{t=1}^{K} \gamma_t$.
    Since $\widehat f_t(\gamma_t) \le 2\sqrt{\gamma_{t,q}}$, $D_K \le 2\sum_{t \in [K]} \sqrt{\gamma_{t,q}}- k\sum_{t \in [K]} \alpha_t$. By Cauchy--Schwarz, we have:
    \[
        \begin{aligned}
            D_K
            \le
            2\sqrt{K\sum_{t \in [K]} \gamma_{t, q}}-
            k \sum_{t \in [K]} \alpha_t
            ~.
        \end{aligned}
    \]

    Summing the online constraints on the requested coordinate gives
    \[
        \sum_{t \in [K]}\gamma_{t, q}
        \le
        1+ \sum_{t \in [K]} \alpha_t
        ~,
    \]
    because the total decrease of $\beta$ is at most $1$. This implies:
    \[
        D_K
        \le
        2\sqrt{K\sum_{t \in [K]}\gamma_{t, q}}-
        k \left(\sum_{t \in [K]}\gamma_{t, q} - 1\right)
        ~.
    \]

    By $D_K > 0$ and the fact that $\sqrt{K} \le \frac{k}{8}$, we have
    $4\sum_{t \in [K]}\gamma_{t, q}-\sqrt{\sum_{t \in [K]}\gamma_{t, q}}-4<0$.
    Therefore:
    \[
        \sum_{t \in [K]}\gamma_{t,q}
        <
        \frac43
        ~.
    \]

    Comparing the lower and upper bounds shown above, we have a contradiction.
    %
    % Therefore the adaptive stopping rule must stop at some $T\in[4,K]$. At that
    % stopping time,
    % \[
    %     D_T
    %     <
    %     16 \sqrt{\frac{T}{\ln k}}
    %     ~.
    % \]
\end{proof}

\begin{lemma}
    \label{lem:dual-benchmark}
    Consider a single phase instance. Suppose $\beta_{1,q}=1$ in the beginning of the instance. There is an offline strategy simultaneously satisfying:
    \begin{enumerate}
        \item The dual objective value is at least $\sqrt{\tau}$; and
        \item Every expert $j \ne q$ has its $\beta$-value increased by $1/k$ (capped at $1$) in the end.
    \end{enumerate}

\end{lemma}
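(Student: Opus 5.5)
We prove \Cref{lem:dual-benchmark} with an explicit offline dual strategy that knows $\tau$ in advance. The idea is to spend the unit of $\beta_q$ evenly over the $\tau$ steps and to set $\alpha_t = 0$ throughout, so the objective pays nothing for $\alpha$. Concretely, we set
\[
\alpha_t = 0,\qquad \gamma_{t,q} = \frac{1}{\tau},\qquad \beta_{t+1,q} = 1 - \frac{t}{\tau}\quad (t\in[\tau]),
\]
starting from $\beta_{1,q}=1$.

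For every $j\ne q$, the only thing needed is $\gamma_{t,j} + \beta_{t+1,j} - \beta_{t,j} \le 0$. We take $\gamma_{t,j}=0$, so the constraint becomes $\beta_{t+1,j}\le\beta_{t,j}$. The $\beta_j$ can therefore never increase, which conflicts with the second claim, and this is where the real work lies.

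To allow the increase of $1/k$ on the other experts, we pay for it with $\alpha$. At the first step (or any single step) we set $\alpha_1 = 1/k$. The constraint for $j\neq q$ then reads $\beta_{2,j} - \beta_{1,j} \le 1/k$, so we may raise every $\beta_j$ by $1/k$, capped at $1$. The cost is $k\alpha_1 = 1$.

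The same $\alpha_1$ also loosens the constraint on coordinate $q$ at step $1$: there we can take $\gamma_{1,q} = 1/\tau + 1/k$ with the same decrease in $\beta_q$. We simply keep $\gamma_{1,q}=1/\tau$, since a larger $\gamma$ only helps and we do not need it.

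With these choices we check the remaining constraints. On coordinate $q$ we have $\gamma_{t,q} + \beta_{t+1,q} - \beta_{t,q} = 1/\tau - 1/\tau = 0 \le \alpha_t$, and $\beta_q$ stays in $[0,1]$.

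Next we bound the objective. Since $1/\tau\le 1$, each step contributes
\[
\hat f_t(\gamma_t) = \frac{2}{\sqrt\tau} - \frac1\tau .
\]
Summing over the $\tau$ steps gives $2\sqrt\tau - 1$, and subtracting $k\alpha_1 = 1$ leaves $2\sqrt\tau - 2$. This is at least $\sqrt\tau$ exactly when $\sqrt\tau\ge 2$, which holds because $\tau \ge 4$ by the definition of the stopping time. This is presumably why $\tau \ge 4$ appears in the stopping rule.

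The main obstacle is the bookkeeping for the boosted $\beta_j$. Adding exactly $1/k$ per expert at a single step has a fixed cost of $1$. This cost must be absorbed by the $\sqrt\tau$ slack, and the margin $2\sqrt\tau-2\ge\sqrt\tau$ is tight at $\tau=4$. The cap at $1$ is harmless because the constraint is only an upper bound on the increase.

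We note also that $\beta_{\tau+1,q}=0$ at the end. This is consistent with the next phase choosing a different expert $q'$, whose $\beta_{q'}$ will be rebuilt toward $1$ through these $1/k$ increments across phases. That rebuilding is not needed for this lemma.
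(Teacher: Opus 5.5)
Your proposal is correct and is essentially the paper's construction: $\gamma_{t,q}=1/\tau$ with $\beta_q$ decreasing linearly, $\alpha_1=1/k$ paying a cost of $1$ to raise every other $\beta_j$ by $1/k$ (capped at $1$) at the first step, and $2\sqrt\tau-2\ge\sqrt\tau$ because $\tau\ge 4$. For a clean write-up, drop the opening claim that $\alpha_t=0$ for all $t$, which you later revise to $\alpha_1=1/k$, and state explicitly that $\beta_{t+1,j}=\beta_{t,j}$ for $j\ne q$ and $t\ge 2$.
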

\begin{proof}
    Assign the unit $\beta$-budget of the requested expert $q$ evenly across the $\tau$ time steps:
    \[
        \gamma_{t,q}=\frac1\tau
        ~,\quad
        \beta_{t+1,q} = 1 - \frac{t}{\tau}
        ~,
        \qquad t \in [\tau]
        ~.
    \]

    For every inactive expert $j\ne q$, the strategy places no dual mass and uses the first step only to increase the $\beta$-value by $1/k$:
    \[
        \gamma_{t,j} = 0
        ~,
        \beta_{t+1,j}
        =
        \min\left\{
        1,\,
        \beta_{1,j}+\frac1k
        \right\},
        \qquad t=1,\ldots,\tau
        ~.
    \]

    The increase is made feasible by setting:
    \[
        \alpha_1=\frac1k
        ~,
        \qquad
        \alpha_t=0 \quad t\in [2,\tau]
        ~,
    \]
    and the corresponding objective penalty is $k\alpha_1=1$.
    Since $\gamma_{t,i} \le \beta_{t,i}-\beta_{t+1,i} + \alpha_t$ is satisfied for every time step $t \in [\tau]$ and every expert $i \in [m]$, this assignment of the dual variables is feasible.
    The offline phase value is at least
    \[
        \begin{aligned}
            \tau\widehat f\big(\frac1\tau\big)-k\alpha_1
             & =
            2\sqrt \tau-2 \ge \sqrt{\tau}
            ~.
        \end{aligned}
    \]
\end{proof}

\begin{proof}[Proof of \Cref{thm:online-barrier}]
    We need to prove that for every constant $C > 0$, there exists an adaptive request sequence $\sigma$ such that
    \[
        \Dual^\star(\sigma)
        \ge
        \frac{\sqrt{\ln k}}{32}
        \Dual(\sigma)
        +
        C
        ~.
    \]

    % \paragraph{Multiple Phases Instance.}
    The adversary repeatedly requests for $R = \lceil C\rceil$ phases where each phase is constructed adaptively following the single phase instance. The requested expert in each phase follows the fixed cyclic order:
    \[
        1,2,\ldots,k+1,1,2,\ldots .
    \] 

    For each single phase $r$, let $D_r$ be the online dual value accumulated during
    that phase, and let $D^{\star}_r$ be the offline optimal value.
    At the beginning, $\beta_1=\textbf{1}$. For each phase $r$, we lower-bound the offline optimum using the strategy of \Cref{lem:dual-benchmark}. The offline strategy uses the first step of each phase to increase every non-requested expert's $\beta$-value by $1/k$, capped at $1$. Therefore, after an expert $q$ is requested, it is not requested again for the next $k$ phases. During these $k$ phases, $q$ receives $k$ increments of its $\beta$-value with size $1/k$. Hence, by the next time $q$ is requested, its $\beta$-value has returned to $1$. Thus the strategy is feasible for the multiple-phase instance. Combining with \Cref{lem:dual-algo}, we obtain $D^{\star}_r \ge \frac{\sqrt{\ln k}}{16}D_r$ and $D^{\star}_r \ge 2$. Averaging these two bounds gives
    \[
        D^{\star}_r
        \ge
        \frac{\sqrt{\ln k}}{32}D_r+1
        ~.
    \]

    Summing up $R$ phases, we obtain
    \[
        \begin{aligned}
            \Dual^\star(\sigma)
            \ge
            \frac{\sqrt{\ln k}}{32}
            \Dual(\sigma)
            +
            C
            ~.
        \end{aligned}
    \]

    This completes the proof.
\end{proof}

\section{Static Model}

In the Static MoE Serving problem, the configuration remains fixed across all time steps, meaning $x_t = x \in \Z_{\ge 0}^m$ for all $t \in [T]$.
The primal convex program simplifies to minimizing the total latency cost $\sum_{t \in [T]} f_t(x)$ subject to $\sum_{i \in [m]} x_i = k$ and $x_i \ge 0$.
That is:
\begin{align*}
    \textrm{minimize} \quad
              &
    \sum_{t \in [T]} f_t(x)           \\
    \textrm{subject to} \quad
              &
    \sum_{i \in [m]} x_i = k          \\
              &
    x_i \ge 0 &   & \forall i \in [m]
\end{align*}

Let $x^\star \in \R_{\ge 0}^m$ denote the optimal fractional solution to this static relaxation.

\subsection{Rounding Algorithm and Approximation Ratio}

We propose a simple deterministic rounding algorithm to convert the fractional optimal solution $x^\star$ into a feasible integral configuration.

\begin{algorithm}{Rounding Algorithm}
    \begin{itemize}
        \item Solve the static convex program relaxation to obtain the optimal fractional configuration $x^\star \in \R_{\ge 0}^m$.
        \item For each expert $i \in [m]$, round down the allocation to construct an integral configuration $\bar{x}_i = \lfloor x^\star_i \rfloor$.
        \item Allocate the remaining $k - \sum_{i \in [m]} \bar{x}_i$ arbitrarily.
    \end{itemize}
\end{algorithm}

\begin{theorem}
    The Rounding Algorithm is a $2$-approximation for the Static MoE Serving problem.
\end{theorem}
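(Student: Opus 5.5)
The plan is to compare the rounded integral configuration against the fractional optimum $x^\star$ coordinatewise, and then use the smoothness property of \Cref{lem:property-of-f} with $\theta = 2$.

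\textbf{Feasibility.} First I check that the algorithm outputs a feasible integral configuration. Since $\lfloor x^\star_i \rfloor \le x^\star_i$, we get $\sum_i \lfloor x^\star_i\rfloor \le \sum_i x^\star_i = k$. So the leftover $k - \sum_i \bar{x}_i$ is a nonnegative integer and can be distributed arbitrarily. Call the final configuration $\tilde{x}$. It is integral, satisfies $\sum_i \tilde{x}_i = k$, and has $\tilde{x}_i \ge \lfloor x^\star_i\rfloor$ for every $i$.

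\textbf{Coordinatewise comparison.} The key elementary inequality is
\[
1 + x \le 2\bigl(1 + \lfloor x \rfloor\bigr) \qquad \text{for } x \ge 0 .
\]
It follows from $x < \lfloor x\rfloor + 1$ and $\lfloor x \rfloor \ge 0$, which give $1 + x < 2 + \lfloor x\rfloor \le 2(1+\lfloor x\rfloor)$. Combining with the previous step, $1 + x^\star_i \le 2(1+\lfloor x^\star_i\rfloor) \le 2(1+\tilde{x}_i)$ for every $i \in [m]$.

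\textbf{Applying smoothness.} Apply the smoothness part of \Cref{lem:property-of-f} with $\theta = 2$, $x_t = x^\star$ and $x'_t = \tilde{x}$. This gives $f_t(\tilde{x}) \le 2 f_t(x^\star)$ for every $t \in [T]$. Summing over $t$,
\[
\sum_t f_t(\tilde{x}) \le 2\sum_t f_t(x^\star).
\]

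\textbf{Relating to the integral optimum.} The right-hand side is twice the optimal value of the static relaxation. The relaxation contains every feasible integral configuration, so its value is at most the optimal integral static cost $\OPT$. Hence the algorithm's cost is at most $2\,\OPT$.

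\textbf{Remaining points.} None of the steps is a real obstacle. The only care needed is on two points:
\begin{itemize}
\item Arbitrarily allocating the leftover GPUs can only increase coordinates above $\lfloor x^\star_i\rfloor$, which helps, since $f_t$ is coordinatewise nonincreasing.
\item The static convex program can be solved in polynomial time, to the accuracy needed for the stated guarantee. It is a convex minimization of a sum of maxima of convex functions over a simplex, so it can be written with epigraph variables and solved by standard convex optimization methods.
\end{itemize}
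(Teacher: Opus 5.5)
Your proposal is correct and follows the paper's proof essentially step for step: round down, apply $1+x \le 2(1+\lfloor x\rfloor)$ coordinatewise to get $f_t(\bar{x}) \le 2 f_t(x^\star)$ for every $t$, sum over $t$, and use that the relaxation value is at most $\OPT$. The only differences are cosmetic: you cite the smoothness part of \Cref{lem:property-of-f} with $\theta = 2$ where the paper expands the maximum directly, and you also spell out feasibility and the elementary inequality, which the paper takes for granted.
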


\begin{proof}
    Let $\bar{x} \in \Z_{\ge 0}^m$ be the configuration produced by the algorithm. By construction, $\bar{x}$ is a feasible configuration.

    For any time step $t \in [T]$, the latency cost under configuration $\bar{x}$ is:
    \[
        f_t(\bar{x}) ~=~ \max_{i \in [m]} \frac{r_{t,i}}{1 + \bar{x}_i} ~\le~ \max_{i \in [m]} \frac{r_{t,i}}{1 + \lfloor x^\star_i \rfloor} ~.
    \]

    Since $1 + x \le 2(1+\lfloor x \rfloor)$, the above is at most:
    \[
        2 \cdot \max_{i \in [m]} \frac{r_{t,i}}{1+x^\star_i} ~=~ 2 \cdot f_t(x^\star)
        ~.
    \]

    Summing over $t \in [T]$, we have $\ALG \le 2 \cdot \Primal \le 2 \cdot \OPT$.
\end{proof}

\subsection{Integrality Gap}

Next, we demonstrate that the above rounding for the convex program relaxation is tight, as the integrality gap approaches exactly $2$ in the worst case.

\begin{theorem}
    The integrality gap of the static convex program relaxation is at least $2 - \frac{1}{m}$, which approaches $2$ as $m \to \infty$.
\end{theorem}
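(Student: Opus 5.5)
We exhibit a single explicit instance on which the fractional optimum of the static relaxation and the integral optimum differ by a factor of exactly $2-\frac{1}{m}$. Take $m$ experts, $n = 2m-1$ GPUs (so $k = m-1$), and a horizon of $T=1$ with the uniform workload $r_{1,i} = 1$ for every $i \in [m]$. The latency cost is then $f_1(x) = \max_i \frac{1}{1+x_i}$.

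\emph{Fractional optimum.} Setting $x_i = \frac{m-1}{m}$ for every $i$ is feasible, since $\sum_i x_i = m-1 = k$. It gives $f_1(x) = \frac{1}{1 + (m-1)/m} = \frac{m}{2m-1}$. This is also the optimum: the feasible constraint gives $\min_i (1+x_i) \le \frac{1}{m}\sum_i (1+x_i) = \frac{2m-1}{m}$, so every feasible $x$ satisfies $f_1(x) \ge \frac{m}{2m-1}$.

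\emph{Integral optimum.} Any integral $x \in \Z_{\ge 0}^m$ with $\sum_i x_i = m-1 < m$ has some coordinate with $x_i = 0$. Hence $f_1(x) \ge 1$, and it equals $1$ for, say, $x = (1,\dots,1,0)$. The integral optimum is therefore $\OPT = 1$.

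\emph{Ratio.} The integrality gap is at least
\[
\frac{1}{m/(2m-1)} = 2-\frac{1}{m},
\]
which tends to $2$ as $m \to \infty$.

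There is no real obstacle here. The only point needing care is justifying the fractional lower bound, which is exactly the averaging argument above. If one wants the gap to hold for arbitrary horizons $T$, repeat the same workload vector $T$ times; both the fractional and integral optima scale by $T$, so the ratio is unchanged.
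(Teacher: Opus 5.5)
Your proposal is correct and matches the paper's proof: the same instance ($k=m-1$, $T=1$, uniform unit workload), the symmetric fractional allocation with cost $\frac{m}{2m-1}$, and the pigeonhole argument forcing some $x_i=0$ in any integral solution. Your averaging argument showing the fractional value is exactly optimal is correct but not needed, since a lower bound on the gap only requires an upper bound on the fractional optimum.
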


\begin{proof}
    Consider an instance with $m$ experts and $n = 2m - 1$ total GPUs, which means the budget is $k = m - 1$.
    Let there be a single time step ($T = 1$) with a uniform workload vector $r_{1,i} = 1$ for all $i \in [m]$.

    For the fractional relaxation, a symmetric allocation evenly distributes the budget: $x^\star_i = \frac{k}{m} = 1 - \frac{1}{m}$ for all $i \in [m]$.
    The latency cost of this fractional solution is:
    \[
        \Primal ~\le~ f_1(x^\star) ~=~ \frac{1}{1 + \left(1 - \frac{1}{m}\right)} ~=~ \frac{m}{2m - 1} ~.
    \]

    Now, consider any feasible integral configuration $x \in \Z_{\ge 0}^m$.
    By the Pigeonhole Principle, at least one expert $i^* \in [m]$ must receive zero additional GPUs beyond the first one, meaning $x_{i^*} = 0$.
    The latency cost is bottlenecked by the expert with the minimum allocation:
    \[
        \OPT ~\ge~ \frac{1}{1 + x_{i^*}} ~=~ 1 ~.
    \]

    The integrality gap is then at least $\frac{2m-1}{m} = 2 - \frac{1}{m}$ as stated in the theorem.
\end{proof}

\subsection{Hardness of Approximation}

Having established a $2$-approximation and a matching integrality gap for the static convex program relaxation, a natural theoretical question is whether we can bypass this relaxation barrier to achieve an arbitrarily good approximation.
We answer this negatively.
More precisely, we rule out the possibility of a Fully Polynomial-Time Approximation Scheme (FPTAS) under the Exponential Time Hypothesis (ETH).

Our reduction relies on the \emph{Densest $k$-Subgraph} (\DkS) problem:

\begin{quote}
    \em
    Given an undirected graph $G=(V, E)$ and an integer $k \le |V|$, find a subset of vertices $S \subseteq V$ of size $k$ that maximizes the number of induced edges $|E(S)|$.
\end{quote}

The hardness of \DkS has been extensively studied.
For a vertex subset $S$ with $|S| = k$, the \emph{density} of the induced subgraph is $|E(S)| / \binom{k}{2} \in [0,1]$, so a $k$-clique has density $1$.
We leverage the following inapproximability result.

\begin{lemma}[{\citealp[Theorem~1]{Manurangsi-STOC-2017}}]
    \label{lem:dks-hardness}
    Assuming ETH, there is a constant $c>0$ such that no polynomial-time algorithm can distinguish between the following two cases for a graph $G$ with $m$ vertices and a positive integer $k \le m$:
    \begin{enumerate}
        \item \textbf{Clique:~} There exists a $k$-clique.
        \item \textbf{Sparse:~} Every $k$-subgraph of $G$ has density at most $\delta = m^{-1/(\log \log m)^c}$.
    \end{enumerate}
\end{lemma}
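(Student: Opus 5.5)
This lemma is imported from the literature rather than proved from scratch, so the plan is to derive it directly from \citet[Theorem~1]{Manurangsi-STOC-2017} and check that the parameters match. Manurangsi's theorem states the following. Assuming ETH, there is a constant $c>0$ such that no polynomial-time algorithm, given an $N$-vertex graph $G$ and an integer $k$, can distinguish whether $G$ contains a $k$-clique or every $k$-vertex subgraph of $G$ has density at most $N^{-1/(\log\log N)^c}$. We simply rename the number of vertices $N$ as $m$, matching the reduction that follows, where each vertex of $G$ will become an expert.

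The steps, in order, are these. First, restate Manurangsi's gap instance in our notation: graph $G=(V,E)$ with $|V|=m$, target size $k\le m$, and density measured as $|E(S)|/\binom{k}{2}$, which is the normalization used in the statement above. Second, verify that his soundness guarantee is phrased with this density notion, so that no rescaling of $\delta$ is needed. If his density is instead normalized as $|E(S)|/k^2$, the two normalizations differ by at most a factor $2$, and that factor is absorbed by slightly shrinking $c$ for large $m$. Third, note that ETH is the only assumption, and that his reduction, from 3-SAT via a label-cover-style construction, produces instances whose size is polynomial in the original. Hence ``distinguish in polynomial time'' in our statement is implied by his (stronger) running-time lower bound.

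The only point requiring care, and thus the main obstacle, is the constant bookkeeping in the second step: ensuring that the density threshold $\delta = m^{-1/(\log\log m)^c}$ survives any change of normalization or any padding of the instance. I would handle it by observing that a loss of any constant factor, or of any polynomial blow-up $m\mapsto m^{O(1)}$, changes $\delta$ only to $m^{-1/(\log\log m)^{c'}}$ for some constant $c'$ that is slightly different from $c$. Since the lemma asserts only the existence of some constant $c>0$, this suffices to conclude.
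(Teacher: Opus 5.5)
Your plan matches the paper, which proves nothing here and simply imports Manurangsi's Theorem~1 with the vertex count renamed to $m$. His statement already uses the same normalization $|E(S)|/\binom{k}{2}$ and already asserts polynomial-time indistinguishability, so your second and third steps are unnecessary. Two of your side remarks are also inaccurate, though neither is load-bearing: his reduction from 3-SAT produces instances of subexponential, not polynomial, size, which is exactly why the conclusion rests on ETH rather than being NP-hardness; and absorbing a constant-factor loss into $\delta = m^{-1/(\log\log m)^c}$ requires enlarging $c$, not shrinking it.
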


Using this result, we next establish the hardness of the Static MoE Serving problem.

\begin{theorem}
    The Static MoE Serving problem is NP-hard. Furthermore, assuming the Exponential Time Hypothesis (ETH), it does not admit an FPTAS.
\end{theorem}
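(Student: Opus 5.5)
The plan is a reduction from \DkS in which vertices become experts and edges become time steps. Given a graph $G=(V,E)$ with $|V|=m$ and a parameter $k$, create one expert per vertex and $n=m+k$ GPUs, so the budget of extra GPUs equals $k$. Create one time step per edge $e=\{u,v\}$ with $r_{e,u}=r_{e,v}=1$ and all other coordinates $0$. For an integral configuration $x$, the latency of step $e$ is $1/(1+\min\{x_u,x_v\})$. Writing $h(a)=1-\frac{1}{1+a}$, the total cost is therefore
\[
    |E| - \sum_{\{u,v\}\in E} h(\min\{x_u,x_v\})
    ~.
\]
Giving one extra GPU to each vertex of a $k$-set $S$ achieves cost $|E|-|E(S)|/2$. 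In the Clique case this gives $\OPT\le |E|-\binom{k}{2}/2$.

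The main step is an upper bound on the savings of an \emph{arbitrary} integral allocation, which may stack several GPUs on one expert. I will use a level decomposition. Let $S_j=\{i: x_i\ge j\}$ and $s_j=|S_j|$, so that $s_1\ge s_2\ge\cdots$ and $\sum_j s_j=k$. Since $h(a)=\sum_{j=1}^{a}\bigl(\frac1j-\frac1{j+1}\bigr)$, an edge with $\min\{x_u,x_v\}=a$ contributes $w_j=\frac{1}{j(j+1)}$ for each $j\le a$. Hence
\[
    \sum_{e\in E} h(\min\{x_u,x_v\}) ~=~ \sum_{j\ge1} w_j\,|E(S_j)|
    ~.
\]
Using $|E(S_j)|\le\binom{s_j}{2}$, $w_j\le\frac12$ and $s_j\le s_1\le k$, the savings are at most $\frac12\sum_j w_j' \binom{s_j}{2}$-type quantities bounded by $\binom{k}{2}/2$. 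Equality forces $s_1=k$, $s_j=0$ for $j\ge2$, and $S_1$ a clique. I expect this bookkeeping, namely showing that stacking GPUs never helps beyond the $0/1$ allocation, to be the main obstacle; the convexity argument via $\sum_j w_j s_j^2\le \frac12 s_1\sum_j s_j$ is what I would try first. With it, $\OPT\le |E|-\binom{k}{2}/2$ holds if and only if $G$ has a $k$-clique. Since $k$-Clique is NP-hard, so is Static MoE Serving; the reduction is polynomial because $T=|E|$.

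For the ETH statement, I apply \Cref{lem:dks-hardness}. In the Sparse case every $S_j$ is contained in some $k$-set, so $|E(S_j)|\le \delta\binom{k}{2}$. Since $\sum_j w_j\le1$, the savings are at most $\delta\binom{k}{2}$, and therefore $\OPT\ge |E|-\delta\binom{k}{2}$. The two cases differ additively by $(\tfrac12-\delta)\binom{k}{2}\ge\binom{k}{2}/4$, while $\OPT\le|E|\le m^2$. So a $(1+\epsilon)$-approximation with $\epsilon=\binom{k}{2}/(8m^2)$ distinguishes them. This $\epsilon$ is at least $1/\mathrm{poly}(m)$, assuming $k\ge2$; trivial cases of \DkS can be handled directly. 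An FPTAS would run in time $\mathrm{poly}(m,1/\epsilon)=\mathrm{poly}(m)$, contradicting \Cref{lem:dks-hardness}. The instance has $m$ experts, $k\le m$ extra GPUs and workloads in $\{0,1\}$, so all quantities are polynomially bounded.
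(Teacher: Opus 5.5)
Your proof is correct, but it uses a different reduction from the paper's. The paper also maps vertices to experts and edges to steps, but it sets $r_{e,i}=2$ on the endpoints and $r_{e,i}=1$ on every other expert. When $k<m$, some expert always has no extra GPU, so this background load of $1$ makes every step's latency at least $1$. A step then costs exactly $1$ if both endpoints have $x\ge 1$ and $2$ otherwise. Stacking is therefore useless, one may take $x\in\{0,1\}^m$, and the cost is exactly $2|E|-|E(S)|$. This gives an exact equivalence with \DkS, so NP-hardness follows from \DkS directly. The Clique/Sparse gap also transfers without loss: savings $\binom{k}{2}$ versus $\delta\binom{k}{2}$.

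Your $\{0,1\}$ workloads make stacking potentially profitable. You therefore need the level-set identity $\sum_e h(\min\{x_u,x_v\})=\sum_j w_j|E(S_j)|$ and a convexity bound to rule it out. Your Sparse-case bound also loses a factor $2$, since you use $\sum_j w_j\le 1$ rather than $w_1=\tfrac12$; this is harmless because $\delta\to 0$. In exchange, you show hardness already for instances where every step loads only two experts with unit workload, which resembles top-$2$ routing.

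One step needs tightening. You propose $\sum_j w_j s_j^2\le\frac12 s_1\sum_j s_j$. Applied after $\binom{s_j}{2}\le s_j^2/2$, it only gives savings at most $s_1k/4\le k^2/4$. That exceeds $\frac12\binom{k}{2}=k(k-1)/4$, so it does not give the exact threshold your clique characterization needs. Use instead
\[
\sum_j w_j\binom{s_j}{2}~\le~\frac12\sum_j\binom{s_j}{2}~\le~\frac12\binom{\sum_j s_j}{2}~=~\frac12\binom{k}{2}.
\]
The last step is superadditivity of $s\mapsto\binom{s}{2}$, with slack $\sum_{j<j'}s_js_{j'}$. Equivalently, apply your convexity bound to $s_j(s_j-1)\le s_j(s_1-1)$.

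Equality then forces the following:
\begin{itemize}
\item $\binom{s_j}{2}=0$ for $j\ge2$, because $w_j<\frac12$ there;
\item at most one nonzero $s_j$, hence $s_1=k$;
\item $|E(S_1)|=\binom{k}{2}$.
\end{itemize}
With this fix, your NP-hardness argument via $k$-Clique and your ETH argument with $\epsilon=\binom{k}{2}/(8m^2)$ both go through.
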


\begin{proof}
    We prove this by a reduction from \DkS.
    Given an instance $(G, k)$ of \DkS where $G=(V,E)$ and $m = |V|$, we construct an instance of Static MoE Serving as follows:
    \begin{itemize}
        \item \textbf{Experts:~} Let there be an expert for each vertex of $G$.
              There are $m = |V|$ experts.
        \item \textbf{Resources:~} Let the total number of GPUs be $n = m + k$.
        \item \textbf{Requests:} Let there be a time step for each edge $e = (u, v)\in E$.
              We will abuse notation and refer to the time step also as $e$.
              Let the workload vector of step $e$ be:
              \[
                  r_{e, i} =
                  \begin{cases}
                  2 & \text{if $i \in \{u, v\}$;} \\
                  1 & \text{otherwise.}
                  \end{cases}
              \]
    \end{itemize}

    Let $x \in \Z_{\ge 0}^m$ be a feasible configuration, i.e., $\sum_{i \in [m]} x_i = k$.
    We may assume without loss of generality that $x_i \in \{0, 1\}$ for all $i$.
    This is because assigning $x_i \ge 2$ does not further decrease the objective, since the maximum workload is $2$ --- the latency cost would be bottlenecked by an expert without additional GPUs.

    Let $S = \{i \mid x_i = 1\}$ be the set of experts receiving an extra GPU. By the budget constraint, $|S|=k$.
    The latency cost for a step $e = (u, v)$ is:
    \[
        f_e(x) = \max_{i \in [m]} \frac{r_{e,i}}{1+x_i} ~.
    \]

    If both endpoints are selected, i.e., $x_u = x_v = 1$, we have $\frac{r_{e,u}}{1+x_u} = \frac{r_{e,v}}{1+x_v} = \frac{2}{2} = 1$. For any other expert $w \notin \{u,v\}$, the cost is at most $\frac{1}{1+0} = 1$. Thus, $f_e(x) = 1$.
    By contrast, if at least one endpoint is not selected, the maximum evaluates to $\frac{2}{1+0} = 2$, yielding $f_e(x) = 2$.

    Therefore, the total objective value is determined by the number of induced edges $|E(S)|$:
    \[
        \ALG(S) ~=~ 1 \cdot |E(S)| + 2 \cdot (|E| - |E(S)|) ~=~ 2|E| - |E(S)| ~.
    \]

    \paragraph{NP-hardness.}
    Minimizing the Static MoE cost is exactly equivalent to maximizing $|E(S)|$, the \DkS objective. Since \DkS is NP-hard, Static MoE Serving is also NP-hard.

    \paragraph{Impossibility of FPTAS.}
    Mapping the hardness of \DkS from \Cref{lem:dks-hardness} to our problem, the optimal cost $\OPT$ behaves as follows:
    \begin{itemize}
        \item In the \textbf{Clique} case, the optimal cost is $\OPT \le 2|E| - \binom{k}{2}$.
        \item In the \textbf{Sparse} case, the optimal cost is $\OPT \ge 2|E| - \delta \binom{k}{2}$.
    \end{itemize}

    Suppose, for contradiction, that there is an FPTAS for Static MoE Serving. For any $\epsilon > 0$, the FPTAS produces a solution with cost at most $(1+\epsilon)\OPT$ in $\mathrm{poly}(m, 1/\epsilon)$ time.

    To successfully distinguish the two cases in polynomial time, we only need to choose $\epsilon$ small enough such that the approximation intervals of the two cases do not overlap. That is, the worst-case approximation of the Clique case must be strictly less than the best-case cost of the Sparse case:
    \[
        (1+\epsilon) \left( 2|E| - \binom{k}{2} \right) ~<~ 2|E| - \delta \binom{k}{2} ~.
    \]

    This inequality simplifies to:
    \[
        2\epsilon |E| ~<~ (1+\epsilon-\delta)\binom{k}{2} ~.
    \]

    Since the graph has $m$ vertices, the total number of edges is bounded by $|E| \le \binom{m}{2}$. Furthermore, $\delta = o(1) \le \frac{1}{2}$ for sufficiently large $m$. Therefore, it suffices to choose an $\epsilon$ satisfying:
    \[
        \epsilon \le \frac{k(k-1)}{4m(m-1)} ~.
    \]

    This means that $1/\epsilon$ is polynomially bounded by $O(m^2/k^2)$. Consequently, the running time of the FPTAS would be polynomial in $m$, giving us a polynomial-time algorithm to distinguish the Clique case from the Sparse case. This contradicts the hardness result of \citet{Manurangsi-STOC-2017}.
\end{proof}

\appendix

\section{Deferred Proof from Section~\ref{sec:convex-programs}}
\label{app:convex-programs}

\subsection{Proof of Lemma~\ref{lem:primal-objective}}

The integer constraints in the original problem require
\(x_{t,i}\in \mathbb Z_{\ge 0}\) and \(\sum_i x_{t,i}=k\). The convex program
relaxes these requirements to \(x_{t,i}\in \mathbb R_{\ge 0}\) and
\(\sum_i x_{t,i}\le k\). Since the objective function $f_t$ is convex and the constraints are linear, the minimum over the continuous relaxed domain is less than or equal to the minimum over the discrete integer domain.

\subsection{Proof of Lemma~\ref{lem:property-of-f}}

\emph{(Conjugate pair)}
Let $A_t(x_t)$ be the set of active coordinates attaining the maximum in $f_t(x_t)$.
By the subgradient convention above, every $p_t\in\partial f_t(x_t)$ can be written as
\[
    p_{t,i}
    =
    -\sum_{j\in A_t(x_t)}
    \lambda_j
    \frac{r_{t,j}}{(1+x_{t,j})^2}
    \mathbf 1_{\{i=j\}}~,
\]
where $\lambda_j\ge 0$ and $\sum_{j\in A_t(x_t)}\lambda_j=1$.
If $\gamma_t=-p_t$, then:
\[
    \left\langle \gamma_t,x_t+\mathbf 1\right\rangle
    = \sum_{j\in A_t(x_t)} \lambda_j \frac{r_{t,j}}{1+x_{t,j}}
    = \sum_{j\in A_t(x_t)} \lambda_j f_t(x_t)
    = f_t(x_t) ~.
\]

\bigskip

\noindent
\emph{(Smoothness of $f_t$)}
Suppose that $\theta\ge 1$ and $x_{t,i}+1\le \theta(x'_{t,i}+1)$ for every $i\in[m]$.
Then
\[
    \frac{r_{t,i}}{1+x'_{t,i}}
    \le
    \theta \frac{r_{t,i}}{1+x_{t,i}}
    \qquad
    \forall i\in[m].
\]
Taking the maximum over $i$ gives $f_t(x'_t)\le \theta f_t(x_t)$.

\bigskip

\noindent
\emph{(Sublinearity of $\hat{f}_t$)}
By the form of $f_t$ and the signed Fenchel transform:
\[
    \hat f_t(\gamma_t)
    =
    \inf_{x_t \ge 0}
    \left(
    \max_{i\in[m]}\frac{r_{t,i}}{1+x_{t,i}}
    +
    \sum_i \gamma_{t,i} x_{t,i}
    \right).
\]

Introduce an epigraph variable $\tau$ for the maximum term.
Then, the infimum for a fixed $\tau$ is achieved when we let $x_{t,i}$ be as small as possible subject to $\tau \ge r_{t,i}/(1+x_{t,i})$ for every $i$.
Thus:
\[
    \hat f_t(\gamma_t) = \inf_{\tau>0} \left[ \tau + \sum_i \gamma_{t,i} \left( \frac{r_{t,i}}{\tau}-1 \right)_+ \right] ~.
\]

Fix $\theta\in(0,1]$.
For any $\tau=\sqrt{\theta}s$ with $s>0$, we have:
\[
    \begin{aligned}
        \tau + \sum_i \theta \gamma_{t,i} \left( \frac{r_{t,i}}{\tau}-1 \right)_+
         & = \sqrt{\theta}s + \theta \sum_i \gamma_{t,i} \left( \frac{r_{t,i}}{\sqrt{\theta}s}-1 \right)_+              \\
         & = \sqrt{\theta} \left[ s + \sum_i \gamma_{t,i} \left( \frac{r_{t,i}}{s} - \sqrt{\theta} \right)_+ \right] ~.
    \end{aligned}
\]

Since $\theta \in (0, 1]$, the term between the brackets is at least:
\[
    s + \sum_i \gamma_{t,i} \left( \frac{r_{t,i}}{s} - 1 \right)_+ \ge \hat{f}_t(\gamma_t) ~.
\]

We conclude that:
\[
    \tau + \sum_i \theta \gamma_{t,i} \left( \frac{r_{t,i}}{\tau}-1 \right)_+ \ge \sqrt{\theta}\,\hat f_t(\gamma_t) ~.
\]

Taking infimum of the left-hand side over $\tau>0$ proves $\hat f_t(\theta \gamma_t) \ge \sqrt{\theta}\,\hat f_t(\gamma_t)$.

\subsection{Derivation of the Offset Fenchel Dual and Proof of Lemma~\ref{lem:primal-dual-objective}}

We derive the dual program in \Cref{sec:convex-programs} and prove the additive weak-duality bound in \Cref{lem:primal-dual-objective}.
For each budget constraint, introduce a multiplier $\alpha_t \in \R_{\ge 0}$.
For each incremental reconfiguration constraint $x_{t,i}-x_{t-1,i}-y_{t,i}\le 0$, introduce a multiplier $\beta_{t,i}\ge 0$.
We also introduce an auxiliary terminal multiplier $\beta_{T+1,i}\ge 0$ for the domain constraint $-x_{T,i}\le 0$; this variable lets us write the dynamic terms uniformly.
Finally, introduce $\gamma_t \in \R_{\ge 0}^m$ as a Fenchel variable and use:
\[
    f_t(x_t)
    =
    \sup_{\gamma_t \in \R_{\ge 0}^m}
    \bigl(
    \hat f_t(\gamma_t) - \langle x_t, \gamma_t \rangle
    \bigr)
    ~.
\]

% For any fixed $\gamma_t\ge 0$, the affine term $\hat f_t(\gamma_t)-\langle x_t,\gamma_t\rangle$ is a lower bound on $f_t(x_t)$.
The corresponding Lagrange lower bound is
\begin{align*}
    L(x,y;\alpha,\beta,\gamma)
    ~=~
     & \sum_{t \in [T]} \Bigl( \hat f_t(\gamma_t) - \langle x_t, \gamma_t \rangle \Bigr)
    + \sum_{t \in [T]}\sum_{i \in [m]} y_{t,i}
    + \sum_{t \in [T]} \alpha_t \Bigl( \sum_{i \in [m]} x_{t,i} - k \Bigr)                      \\
     & + \sum_{t \in [T]}\sum_{i \in [m]} \beta_{t,i} \big( x_{t,i} - x_{t-1,i} - y_{t,i} \big)
    - \sum_{i \in [m]} \beta_{T+1,i}x_{T,i} ~.
\end{align*}

For every feasible $(x,y)$, the equality term vanishes, the inequality terms are non-positive, and $\hat f_t(\gamma_t) - \langle x_t, \gamma_t \rangle \le f_t(x_t)$ by definition.
Hence, $\inf_{x,y\ge 0} L(x,y;\alpha,\beta,\gamma)$ is a lower bound on $\Primal^\star$ for all $\alpha \ge 0$, $\beta \ge 0$, and $\gamma \ge 0$.

Grouping terms by the primal variables gives
\begin{align*}
    L(x,y;\alpha,\beta,\gamma)
    ~=~
    \sum_{t \in [T]} \hat f_t(\gamma_t)
     & + \sum_{t \in [T]}\sum_{i \in [m]} x_{t,i}
    \bigl( \alpha_t + \beta_{t,i} - \beta_{t+1,i} - \gamma_{t,i} \bigr)          \\
     & + \sum_{t \in [T]} \sum_{i \in [m]} y_{t,i} \bigl( 1 - \beta_{t,i} \bigr) \\
     & - \sum_{t \in [T]} k \alpha_t - \langle x_0, \beta_1 \rangle ~.
\end{align*}

The minimization over $x_{t,i}\ge 0$ and $y_{t,i}\ge 0$ yields dual constraints:
\[
    \gamma_{t,i} + \beta_{t+1,i} - \beta_{t,i} \le \alpha_t~,
    \qquad
    \gamma_{t,i}\ge 0~,
    \qquad
    0\le \beta_{t,i}\le 1~.
\]

The resulting Fenchel dual program is:
\begin{align*}
    \textrm{maximize} \quad
     &
    \sum_{t \in [T]} \hat{f}_t(\gamma_t) - \sum_{t \in [T]} k \alpha_t - \langle x_0,\beta_1\rangle        \\
    \textrm{subject to} \quad
     & \gamma_{t,i} + \beta_{t+1,i} - \beta_{t,i} \le \alpha_t &  & \forall t \in [T], \forall i \in [m]   \\[1ex]
     & 0 \le \beta_{t,i} \le 1                                 &  & \forall t \in [T+1], \forall i \in [m] \\[1ex]
     & \gamma_{t,i} \ge 0                                      &  & \forall t \in [T], \forall i \in [m] \\[1ex]
     & \alpha_{t} \ge 0                                      &  & \forall t \in [T]
     ~.
\end{align*}

By weak duality, the optimal value of this dual is at most $\Primal^\star$.
The dual program in \Cref{sec:convex-programs} omits the term $-\langle x_0,\beta_1\rangle$ from the objective.
By $0\le \beta_{1,i}\le 1$ for all $i$ and $\|x_0\|_1=k$, we have $0\le \langle x_0,\beta_1\rangle\le k$.
Hence, its optimal value $\Dual^\star$ is at most the optimal of the above dual plus $k$.
This gives $\Dual^\star \le \Primal^\star + k$.

\section{Deferred Proofs from Section~\ref{sec:dynamic}}

\subsection{Proof of Lemma~\ref{lem:balance-cost}}
\label{app:balance-cost}

First, we bound the incremental reconfiguration cost using the inequality $u - v \le u \ln\dfrac{u}{v}$ for $u, v > 0$:
\[
    x_{t,i} - x_{t-1,i} ~\le~ (x_{t,i}+1) \ln \frac{x_{t,i}+1}{x_{t-1,i}+1} ~.
\]

Summing over all $i$ where the allocation strictly increases (i.e., $x_{t,i} > x_{t-1,i} \ge 0$):
\begin{align*}
    \sum_i y_{t,i}
     &
    ~\le~ \sum_{i \,:\, x_{t,i} > x_{t-1,i}} (x_{t,i}+1) \ln \frac{x_{t,i}+1}{x_{t-1,i}+1}
    ~.
\end{align*}

By \Cref{eqn:kkt} with equality (since $x_{t,i} > 0$) and the relation between $\alpha_t, \gamma_{t,i}$ and $\alpha'_t, \gamma'_{t,i}$:
\[
    \sum_i y_{t,i}
    ~\le~ \eta \sum_{i \,:\, x_{t,i} > x_{t-1,i}} (x_{t,i}+1)(\gamma_{t,i} - \alpha_t)
\]

Dropping $\alpha_t \ge 0$ and expanding the summation to all $i \in [m]$ (since $\gamma_{t,i} \ge 0$), we have:
\[
    \sum_i y_{t,i} ~\le~ \eta \sum_{i \in [m]} (x_{t,i}+1) \gamma_{t,i} ~.
\]

Again by the relation between $\gamma_{t,i}$ and $\gamma^{\prime}_{t,i}$, we have:
\[
    \sum_i y_{t,i}
    ~\le~ \sum_{i \in [m]} (x_{t,i}+1) \gamma^{\prime}_{t,i}
    ~=~ f_t(x_t) ~,
\]
where the last equality follows by \Cref{lem:property-of-f}.

\bibliographystyle{plainnat}
\bibliography{moe}

\end{document}

%% file: notation.tex
\newcommand{\Z}{\mathbb{Z}}

\newcommand{\R}{\mathbb{R}}

\newcommand{\ALG}{\textsc{ALG}}
\newcommand{\OPT}{\textsc{OPT}}
\newcommand{\Primal}{\textsc{Primal}}
\newcommand{\Dual}{\textsc{Dual}}

\newcommand{\KL}{D_\mathrm{KL}}

\DeclareMathOperator*{\argmin}{arg\,min}

\newcommand*{\defeq}{\stackrel{\text{def}}{=}}

\DeclareMathOperator{\E}{\mathbf{E}}

\newcommand{\DkS}{DkS\xspace}